\definecolor{mygreen}{cmyk}{0.7,     0,      0.9,      0}
\def\be{\begin{equation}}
\def\ee{\end{equation}}
\newcommand{\bfv}{\boldsymbol{v}}
\newcommand{\bfx}{{\boldsymbol{x}}}
\newcommand{\bfu}{\boldsymbol{u}}
\newcommand{\ba}{\boldsymbol{a}}
\newcommand{\Vv}{\boldsymbol{V}}
\newcommand{\Xx}{\boldsymbol{X}}
\newtheorem{thm}{Theorem}[section]
\newtheorem{lem}[thm]{Lemma}
\newtheorem{cor}[thm]{Corollary}
\newtheorem{rem}[thm]{Remark}
\begin{document}

\title*{The Maxwell-Stefan Diffusion Limit of a Hard-Sphere Kinetic Model for Mixtures }
%\titlerunning{The Maxwell-Stefan Limit for a Kinetic Model of Non-Reactive Mixtures} %for an abbreviated version of
% your contribution title if the original one is too long
\author{Benjamin Anwasia}
% Use \authorrunning{Short Title} for an abbreviated version of
% your contribution title if the original one is too long
\institute{Benjamin Anwasia \at Centro de Matem\'atica da Universidade do Minho, Campus de Gualtar,4710-057 Braga, Portugal \email{id6226@alunos.uminho.pt}
%\and Name of Second Author \at Name, Address of Institute \email{name@email.address}
}
%
% Use the package "url.sty" to avoid
% problems with special characters
% used in your e-mail or web address
%
\maketitle

\abstract*{Each chapter should be preceded by an abstract (10--15 lines long) that summarizes the content. The abstract will appear \textit{online} at \url{www.SpringerLink.com} and be available with unrestricted access. This allows unregistered users to read the abstract as a teaser for the complete chapter. As a general rule the abstracts will not appear in the printed version of your book unless it is the style of your particular book or that of the series to which your book belongs.
Please use the 'starred' version of the new Springer \texttt{abstract} command for typesetting the text of the online abstracts (cf. source file of this chapter template \texttt{abstract}) and include them with the source files of your manuscript. Use the plain \texttt{abstract} command if the abstract is also to appear in the printed version of the book.}

\abstract{We study a kinetic model for non-reactive mixtures of monatomic gases with hard-sphere cross-sections under isothermal condition. 
%and  perform a formal asymptotic analysis on the special case where the chemical reaction is turned off. More precisely, we consider the simple reacting sphere kinetic model without the chemical reaction terms.
By considering a diffusive scaling of the kinetic model and using the method of moments, we formally obtain from the continuity and momentum balance equations of the species, in the limit as the scaling parameter goes to zero, the Maxwell-Stefan diffusion equations, with an explicit expression for the diffusion coefficients.}
%{Each chapter should be preceded by an abstract (10--15 lines long) that summarizes the content. The abstract will appear \textit{online} at \url{www.SpringerLink.com} and be available with unrestricted access. This allows unregistered users to read the abstract as a teaser for the complete chapter. As a general rule the abstracts will not appear in the printed version of your book unless it is the style of your particular book or that of the series to which your book belongs.\newline\indent Please use the 'starred' version of the new Springer \texttt{abstract} command for typesetting the text of the online abstracts (cf. source file of this chapter template \texttt{abstract}) and include them with the source files of your manuscript. Use the plain \texttt{abstract} command if the abstract is also to appear in the printed version of the book.}
%%%%%%%%%%%%%%%%%%%%%%%%%%%%%%%%%%%%%%%%%%%%%%%%%%%%%%%%%%%%%%%%%%%%%%%%
\keywords{Kinetic theory of gases; Maxwell-Stefan equations; Diffusion.}
%%%%%%%%%%%%%%%%%%%%%%%%%%%%%%%%%%%%%%%%%%%%%%%%%%%%%%%%%%%%%%%%%%%%%%%%%
\section{Introduction}
\label{sec:1}
The study of diffusion phenomena is very important due to its varied uses in many fields, such as engineering, physics, biology, chemistry, etc. The most classical constitutive law used to describe diffusive transport was given by Fick in \cite{fic1, fick2}. Fick postulated that flux goes from higher concentration regions  to lower concentration regions  with a magnitude that is proportional to the concentration gradient. This proportionality relation between flux and concentration gradient postulated by Fick is predominantly used to model diffusion. In many situations, it provides accurate description of diffusive transport, while in others, it seems to be too simplistic (as in the case of multispecies/multicomponent mixtures) and hence cannot be used to describe diffusion in these cases. 

The limitations of using Fick's  constitutive relation to describe diffusion in multicomponent gaseous mixtures  and some other situations have been shown experimentally, see for example \cite{Dun-Tor-62,  KW-CES-97}. In the case of multicomponent gaseous mixtures, three distinct diffusion phenomena, referred to as osmotic diffusion, uphill or reverse diffusion and diffusion barrier were observed. These three types of diffusion phenomena cannot be described by the Fickian approach. More precisely, osmotic diffusion corresponds to a situation of diffusion without a gradient. Uphill or reverse diffusion refers to a situation in which flux goes from lower concentration regions to higher concentration regions. Diffusion barrier is a situation  of diffusion  in which flux is zero.

Due to the  shortcomings of the Fick's constitutive relation in describing diffusion in multicomponent mixtures, a more general constitutive relation  known as the Maxwell-Stefan (MS) equations  \cite{max1866, ste1871} is often used instead. 
The MS equations rely on the fact that the driving force of the species in a multicomponent mixture is in local equilibrium with the total inter-species drag/friction force. 

In spite of the importance of the MS equations in describing diffusion in multi-species mixtures, its mathematical study is relatively new. In particular, \cite{BGP-NA-17,bou-gre-sal-15, HS-MMAS-17} dealt with the formal derivation of diffusion models of MS type from a diffusive scaling of Boltzmann type equations for non-reactive multi-species mixtures, under isothermal condition (i.e. uniform in space and constant in time mixture temperature). The existence and uniqueness issues, as well as the long-time behaviour of solutions of the MS diffusion systems have been considered in \cite{Bothe11, bou-gre-sal-12, Jungel13}. The numerical study of  MS equations has been considered in  \cite{McB14}. The derivation of a non-isothermal diffusion model of MS type from a kinetic model for non-reactive mixtures has been considered in \cite{HS-NA-17}. The derivation of an isothermal reaction diffusion model of MS type from the simple reacting sphere (SRS) kinetic model has been considered in  \cite{AGS-19}. More precisely, from a scaling of the SRS kinetic model which corresponds to a situation where the dominant role in the evolution of the species is played by mechanical interaction, while chemical reactions are assumed to be slow enough to allow the surroundings to continually compensate for the difference in heat between the reactants and products. Here an explicit expression for the MS diffusion coefficients was obtained. The derivation of  non-isothermal  diffusion model of MS type from a kinetic model for a reactive mixture of polyatomic gases with a continuous structure of internal energy  has been considered in \cite{ABSS-20}. More precisely, from a scaling where mechanical collisions are dominant while chemical reactions are slow. Here the MS diffusion coefficients  are not explicit because a general cross-sections was considered.

The goal of this work is to derive an isothermal diffusion model of MS type,  with explicit expressions for the MS diffusion coefficients, as a hydrodynamic limit of a kinetic model for non-reactive mixtures, with hard-sphere cross-sections. Also, the computations are given in detail in order to show some particular aspects and technicalities involved in the derivation. We emphasize that the main difference between this work and \cite{BGP-NA-17,bou-gre-sal-15, HS-MMAS-17} is the cross-sections considered. More precisely, \cite{BGP-NA-17} and \cite{HS-MMAS-17} considered  general and analytic cross-sections, respectively, and as a result, the MS diffusion coefficients obtained were not explicit. In \cite{bou-gre-sal-15}, Maxwellian molecules were considered. Here explicit expressions for the diffusion coefficients were obtained, but they are different from those obtained in the present work.

The rest of this work is organized as follows. In section 2, we introduce the kinetic model for a non-reactive mixture with hard-sphere cross-sections. In section 3,  we introduce the scaled kinetic equation, its properties and the assumptions that will be needed in our analysis. 
In section 4, we present the MS diffusion limit of the scaled kinetic equations. More preciesely, we obtain the species continuity equations and the momentum balance equations for the species from  the scaled kinetic equations. Finally, we present the asymptotic analysis of the species continuity and momentum balance equations  towards the diffusion model of MS type. Section 6 is dedicated to our conclusions.
%%%%%%%%%%%%%%%%%%%%%%%%%%%%%%%%%%%%%%%%%%%%%%%%%%%%%%%%%%%%%%%%%%%%%%%%%
\section{Kinetic Model}
\label{sec:2}
% Always give a unique label
% and use \ref{<label>} for cross-references
% and \cite{<label>} for bibliographic references
% use \sectionmark{}
% to alter or adjust the section heading in the running head
The starting point of our analysis is a system of Boltzmann-type equations that describe the evolution of a non-reactive mixture of $N$ monatomic inert gases, $A_i$ with $i=1,2, \dots ,N$. Particles in the mixture undergo elastic collisions of hard-sphere type. These elastic collisions occur between particles of the same species and between particles of different species. More precisely, in the absence of external forces, let  $f_i\!:=\! f_i(t,\bfx, \bfv_i)\!\geq\!0$ be the unknown probability distribution function, representing the density of particles of species $A_i$ which at time $t$ are located at position ${\bfx}$ and have velocity ${\bfv}_i$, we will study the following Cauchy problem:
\begin{equation}
\begin{aligned}
\frac{\partial f_i}{\partial t}\!+\!{\bfv}_i\!\cdot\! \frac{\partial f_i}{\partial {\bfx}} &= J_i+\sum_{\substack{s=1\\ s\neq i}}^{N}J_{is},\quad (t,\bfx, \bfv_i)\in\mathbb{R}_+ \times \mathbb{R}^3 \times \mathbb{R}^3,\, i\!=\!1,2,3,4,
\\[-0.2em]
f_i(0 , \bfx , \bfv_i) &= (f_i^{\rm in}) (\bfx , \bfv_i), \quad  (\bfx,\bfv_i)\in \mathbb{R}^3 \times \mathbb{R}^3,
\end{aligned}
\label{eq:Scaled_KE*}
\end{equation}
where 
\begin{equation}
J_i=\sigma_{ii}^2 \int _{{\mathbb{R}}^3}\int _{{\mathbb{S}^2_+}}
             \left [ {f}_i{'}{f'_{i_*}}-f_if_{i_*} \right ]\left \langle \epsilon ,{\bfv}_i-{\bfv}_{i_*} \right \rangle d\epsilon \,d{\bfv}_{i_*}
\label{eq:J_i}
\end{equation}
is the mono-species collision operator and it represents collisions between particles of the same species,
\vspace{-0.2cm}
\begin{equation}
J_{is}=\sigma^2 _{is} \int_{{\mathbb{R}}^3} \int _{{\mathbb{S}^2_+}}
              \left [ {f}_i{'}{f'_s}-f_if_s \right] \left \langle \epsilon ,{\bfv}_i-{\bfv}_s \right\rangle d\epsilon\, d{\bfv}_s
\label{eq:J_is}
\end{equation}
is the bi-species collision operator and it represents  collisions between particles of different species.

\noindent In equations \eqref{eq:J_i} and \eqref{eq:J_is} for the mono-species and bi-species collision operators,
${f}_i' \!=\! f(t,{\bfx},{\bfv}'_i)$, 
$ {f}_{i*}' \!=\! f(t,{\bfx},{\bfv}'_{i*})$, 
${f}_s' \!=\! f(t,{\bfx},{\bfv}'_s)$, 
%${f_k}^{\circ} \!=\! f(t,{\bfx},{\bfv}_k^{\circ})$, ${f_l}^{\circ} \!=\! f(t,{\bfx},{\bfv}_l^{\circ})$, 
$\epsilon$ is a unit vector directed along the line joining the centre of the two spheres at the moment of impact (i.e. $\epsilon \in \mathbb{S}_+^2 \!=\! \big\{ \overline\epsilon\in \mathbb{R}^3: \; \| \overline\epsilon\| = 1, \; \langle\overline\epsilon,{\bf{v}}_i - {\bf{v}}_s\rangle > 0\big\}$),  $\left\langle \cdot , \cdot \right\rangle$ represents the inner product in $\mathbb{R}^3$ and
$\|\cdot\|$ is  the norm induced by this inner product.
Furthermore, ${\bfv}_i, {\bfv}_{i*}, {\bfv}_s $ are the pre-collisional velocities, 
${\bfv}'_i, {{\bfv}'_{i_*}}, {\bfv}'_s $ are the post-collisional velocities, $\sigma_{ii*}^2$ and  $\sigma_{is}^2$ are the mono-species and the bi-species collision cross-sections defined respectively as
\begin{equation*}
\sigma _{ii}^2=d_i^2, \qquad\sigma _{is}^2=\frac{1}{4}(d_i+d_s)^2,  \qquad i,s=1,2,3,4, \quad i\neq s,
%\label{eq:ECS}
\end{equation*}
where $d_i$ and $d_s$ are the diameters of particles of species $A_i$ and $A_s$.

 \smallskip
 
Since the collisions are elastic, both momentum and kinetic energy are conserved. Therefore, the conservation laws of linear momentum and kinetic energy for bi-species collisions are given by:
\begin{align}
&m_i{\bfv}_i\!+\!m_s{\bfv}_s \!=\! m_i{\bfv}_i'+m_s{\bfv}_s',
\label{eq:ECOM}
\\[-0.1em]
&\frac{1}{2} m_i({\bfv}_i)^2\!+\frac{1}{2}\!m_s({\bfv}_s)^2\! =\! \frac{1}{2}m_i({\bfv}_i')^2\!+\!\frac{1}{2}m_s ({\bfv}_s')^2 ,
             \label{eq:ECOKE}
             \end{align}
respectively. The post-collisional velocities can be written in terms of the pre-collisional velocities as
\begin{equation}
{\bfv}_i' = {\bfv}_i-2\frac{\mu_{is}}{m_i} \epsilon \left \langle \epsilon ,{\bfv}_i - {\bfv}_s \right \rangle \quad   \mbox{and} \quad {\bfv}_s'= {\bfv}_s+2\frac{\mu_{is}}{m_s}\epsilon \left \langle \epsilon ,{\bfv}_i-{\bfv}_s \right \rangle,
\label{eq:EPCV}
\end{equation}
where $\displaystyle \mu_{is}=\frac{m_im_s}{m_i+m_s}$ is the reduced mass.
%%%%%%%%%%%%%%%%%%%%%%%%%%%%%%%%%%%%%%%%%%%%%%%%%%%%%%%%%%%%%%%%%%%%%%%%%
\begin{rem}
 For same species collisions, the indices $i$ and $i*$ are used to distinguish their velocities. Also, the conservation of linear momentum and kinetic energy for mono-species collisions are respectively, given by:
\begin{align}
m_i{\bfv}_i\!+\!m_i{\bfv}_{i*} &= m_i{\bfv}_i'+m_i{\bfv}_{i*}',
\label{eq:ECOM_mono}
\\[-0.1em]
             m_i({\bfv}_i)^2\!+\!m_i({\bfv}_{i*})^2& = m_i({\bfv}_i')^2\!+\!m_i ({\bfv}_{i*}')^2.
\label{eq:ECOKE_mono}
\end{align}
Furthermore, the post-collisional velocities can be written in terms of the pre-collisional velocities as given below:
\begin{equation}
{\bfv}_i' = {\bfv}_{i}- \epsilon \left \langle \epsilon ,{\bfv}_i - {\bfv}_{i*} \right \rangle \quad   \mbox{and} \quad {\bfv}_{i*}'= {\bfv}_{i*}+\epsilon \left \langle \epsilon ,{\bfv}_i-{\bfv}_{i*} \right \rangle.
%\label{eq:EPCV}
\end{equation}
\end{rem}
%%%%%%%%%%%%%%%%%%%%%%%%%%%%%%%%%%%%%%%%%%%%%%%%%%%%%%%%%%%%%%%%%%%%%%%%%
\begin{rem}
 The kinetic equations given in the first row of \eqref{eq:Scaled_KE*}  with $J_i$ and $J_{is}$  defined in  \eqref{eq:J_i} and \eqref{eq:J_is}, respectively, can be obtained from the simple reacting sphere kinetic model studied in \cite{PS2017} when the chemical reaction is turned off.
\end{rem}
%%%%%%%%%%%%%%%%%%%%%%%%%%%%%%%%%%%%%%%%%%%%%%%%%%%%%%%%%%%%%%%%%%%%%%%%%
\section{Scaled Kinetic Equations, Properties and Assumptions}
In this section, we first present the scaled version of the kinetic equations given in the first row of \eqref{eq:Scaled_KE*}. Then we present some of  its properties and  the assumptions that will be used to obtain the desired diffusion model.
%and the properties of the scaled kinetic equations  that will be required for our analysis. 
%%%%%%%%%%%%%%%%%%%%%%%%%%%%%%%%%%%%%%%%%%%%%%%%%%%%%%%%%%%%%%%%%%%%%%%%%
\subsection{Scaled Kinetic Equations}
Introducing a reference length and time scales together with a reference temperature, one can define the dimensionless time, space, velocities (obtained using the speed of sound in a monatomic gas), cross-sections and number densities. Thus the kinetic equations given in the first row of \eqref{eq:Scaled_KE*},  where $J_i$ and $J_{is}$ are as defined in \eqref{eq:J_i} and \eqref{eq:J_is}, respectively, can be written in the following dimensionaless/scaled form 
\begin{equation}
\begin{aligned}
\alpha\frac{\partial f_i^\alpha}{\partial t}\!+\!{\bfv}_i\!\cdot\! \frac{\partial f_i^\alpha}{\partial {\bfx}} \!&=\frac{1}{\alpha}\underbrace{\sigma_{ii}^2 \int _{{\mathbb{R}}^3}\int _{{\mathbb{S}^2_+}}
             \left [ {f}_i^{\alpha '}{f^{\alpha '}_{i_*}}-f_i^\alpha f_{i_*}^\alpha \right ]\left \langle \epsilon ,{\bfv}_i-{\bfv}_{i_*} \right \rangle d\epsilon \,d{\bfv}_{i_*}}_{J_i^{\alpha}} 
             \\[-0.2em]             
              &+\frac{1}{\alpha} \sum_{\substack{s=1\\ s\neq i}}^{N}
              \underbrace{\sigma^2 _{is} \int_{{\mathbb{R}}^3} \int _{{\mathbb{S}^2_+}}
              \left [ {f}_i^{\alpha'}{f^{\alpha '}_s}-f^\alpha_i f^\alpha_s \right] \left \langle \epsilon ,{\bfv}_i-{\bfv}_s \right\rangle d\epsilon\, d{\bfv}_s}_{J_{is}^{\alpha}},
\label{eq:scaled_KE}
\end{aligned}
\end{equation}
for $(t,\bfx, \bfv_i)\in\mathbb{R}_+ \times \mathbb{R}^3 \times \mathbb{R}^3,\, i=1,2,3,4,$ where  $\alpha$ which is such that $0 < \alpha \ll 1$ is the formal scaling parameter representing the Knudsen number, $f_i^\alpha,\, f_i^{\alpha'},\, f_{i*}^{\alpha},\, f_{i*}^{\alpha'},\, f_s^{\alpha}$ and $f_s^{\alpha'}$ are the scaled distribution function, $J_{i}^{\alpha}$ is the scaled mono-species collision operator and $J_{is}^{\alpha}$ is the scaled bi-species collision operator. We have also assumed that the Mach number is of the same order of magnitude as the Knudsen number. 

\smallskip

In what follows, the evolution domain of the mixture is represented by an open bounded domain $\Omega \subset \mathbb{R}^3$, with regular boundary and we will consider the following  Cauchy problem:
\begin{equation}
\begin{aligned}
\alpha\frac{\partial f_i^\alpha}{\partial t}\!+\!{\bfv}_i\!\cdot\! \frac{\partial f_i^\alpha}{\partial {\bfx}} \!\!&=\!\!\frac{1}{\alpha}\bigg(\! J_i^{\alpha}\!+\!\sum_{\substack{s=1\\ s\neq i}}^{N} J_{is}^{\alpha}\!\bigg),\quad (t,\bfx, \bfv_i)\in\mathbb{R}_+ \times \mathbb{R}^3 \times \mathbb{R}^3,\, i\!=\!1,2,3,4,
\\
 f_i^\alpha(0 , \bfx , \bfv) &= (f_i^{\rm in})^\alpha (\bfx , \bfv), \quad  (\bfx,\bfv)\in \mathbb{R}^3 \times \mathbb{R}^3.
\end{aligned} 
\label{eq:Scaled_KE}
\end{equation}
%%%%%%%%%%%%%%%%%%%%%%%%%%%%%%%%%%%%%%%%%%%%%%%
\subsection{Properties of the Collision Operators}
Here we will present some properties of the mono-species and bi-species collision operators that will be needed in our analysis.
\begin{lem}
\label{lem:weakform_MS}
Let $\varphi({\bf{v}}_i)$ be a sufficiently smooth test function. Then, the weak form of the mono-species collision operator $J_i^{\alpha}$ is given by
\begin{multline}
\int_{\mathbb{R}^3}\!J_i^{\alpha}\varphi({\bfv}_i)\,d{\bfv}_i 
      \!=\!  \frac{1}{4} \sigma_{ii}^2 \int_{\mathbb{R}^3}\!\int _{{\mathbb{R}}^3}\!\int %_{{\mathbb{S}^2_+}}
      \left [ {f}_i^{\alpha'}{f}_{i*}^{\alpha'}-f_i^\alpha f_{i*}^\alpha \right ]\left \langle \epsilon ,{\bfv}_i-{\bfv}_{i*} \right \rangle 
      \\
      \times\left [ \varphi({\bfv}_i)\!+\!\varphi({\bfv}_{i*})\!-\!\varphi({\bfv}'_i)\!-\!\varphi({\bfv}'_{i*}) \right ] 
     d\epsilon \,d{\bfv}_{i*} d{\bfv}_i.
\label{eq:weakform_MSECO}
\end{multline}
\end{lem}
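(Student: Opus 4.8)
The plan is to reduce the weak form to the symmetric ``collisional'' form by exploiting two measure-preserving changes of variables that leave the hard-sphere kernel invariant. Throughout I would write $\int_{\mathbb{R}^3} J_i^\alpha \varphi(\bfv_i)\,d\bfv_i$ as the triple integral over $(\bfv_i,\bfv_{i*},\epsilon)\in\mathbb{R}^3\times\mathbb{R}^3\times\Splus$ of the quantity $[f_i^{\alpha'}f_{i*}^{\alpha'}-f_i^\alpha f_{i*}^\alpha]\,\langle\epsilon,\bfv_i-\bfv_{i*}\rangle\,\varphi(\bfv_i)$, keeping in mind that the domain $\Splus$ itself depends on $\bfv_i-\bfv_{i*}$ through the constraint $\langle\epsilon,\bfv_i-\bfv_{i*}\rangle>0$.

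First I would symmetrize in the two colliding particles. Applying the relabelling $(\bfv_i,\bfv_{i*},\epsilon)\mapsto(\bfv_{i*},\bfv_i,-\epsilon)$, I check that the relative velocity $\bfv_i-\bfv_{i*}$ changes sign, that the extra sign produced by $\epsilon\mapsto-\epsilon$ restores both the kernel $\langle\epsilon,\bfv_i-\bfv_{i*}\rangle$ and the half-sphere constraint, and that the post-collisional velocities given in the Remark are simply interchanged (these formulas are even in $\epsilon$, so $\bfv_i'\leftrightarrow\bfv_{i*}'$). The gain--loss bracket $f_i^{\alpha'}f_{i*}^{\alpha'}-f_i^\alpha f_{i*}^\alpha$ is therefore invariant, while $\varphi(\bfv_i)$ turns into $\varphi(\bfv_{i*})$. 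Averaging the original integral with this relabelled copy shows that the weak form equals $\tfrac12\sigma_{ii}^2\int\!\int\!\int [f_i^{\alpha'}f_{i*}^{\alpha'}-f_i^\alpha f_{i*}^\alpha]\,\langle\epsilon,\bfv_i-\bfv_{i*}\rangle\,[\varphi(\bfv_i)+\varphi(\bfv_{i*})]$.

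Next I would exploit the pre-/post-collisional involution. I substitute $(\bfv_i,\bfv_{i*})\mapsto(\bfv_i',\bfv_{i*}')$; this map is an orthogonal reflection, hence has unit Jacobian, and is an involution, so it returns $(\bfv_i',\bfv_{i*}')$ to $(\bfv_i,\bfv_{i*})$ and interchanges the two factors of the gain--loss bracket, producing an overall minus sign. A direct computation gives $\langle\epsilon,\bfv_i'-\bfv_{i*}'\rangle=-\langle\epsilon,\bfv_i-\bfv_{i*}\rangle$, so the kernel also changes sign; combined with a further $\epsilon\mapsto-\epsilon$ to return the integration variable to $\Splus$, the net effect is to replace $\varphi(\bfv_i)+\varphi(\bfv_{i*})$ by $-[\varphi(\bfv_i')+\varphi(\bfv_{i*}')]$ while leaving the kernel and the prefactor unchanged. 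Since both expressions equal the weak form, averaging them collapses the two contributions into $\tfrac14[\varphi(\bfv_i)+\varphi(\bfv_{i*})-\varphi(\bfv_i')-\varphi(\bfv_{i*}')]$, which is exactly \eqref{eq:weakform_MSECO}.

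The step I expect to be delicate is the bookkeeping of the half-sphere domain $\Splus$ under the collisional substitution: because the involution flips the sign of $\langle\epsilon,\bfv_i-\bfv_{i*}\rangle$, the defining constraint of $\Splus$ is violated unless one simultaneously reverses $\epsilon$, and one must verify that the post-collisional velocity formulas are unchanged under $\epsilon\mapsto-\epsilon$ so that the integrand is genuinely re-expressed over $\Splus$. The remaining ingredients---the unit Jacobian of the reflection and the evenness of the velocity maps in $\epsilon$---are routine, but I would state each explicitly to justify the corresponding change of variables.
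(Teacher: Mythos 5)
Your argument is correct and is exactly the ``standard argument'' the paper itself invokes by citing Cercignani and Boudin--Grec--Salvarani rather than writing it out: symmetrization in the colliding pair followed by the pre-/post-collisional involution (with the compensating $\epsilon\mapsto-\epsilon$ to stay on $\Splus$), each change of variables contributing a factor $\tfrac12$ and the sign flip of the gain--loss bracket producing the $-\varphi(\bfv_i')-\varphi(\bfv_{i*}')$ terms. The sign and the $\tfrac14$ prefactor come out matching \eqref{eq:weakform_MSECO}, and your flagged subtlety about the half-sphere domain is handled correctly.
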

%%%%%%%%%%%%%%%%%%%%%%%%%%%%%%%%%%%%%%%%%%%%%%%%%%%%%%%%%%%%%%%%%%%%%%%%%
\begin{lem}
\label{lem:weakform_BS}
let $\varphi({\bf{v}}_i)$ be a sufficiently smooth test function. Then, the weak form of the bi-species collision operator  $J_{is}^{\alpha}$ is given by 
\begin{equation}
\int_{\mathbb{R}^3}\!\!J_{is}^{\alpha}\varphi({\bfv}_i)d{\bfv}_i
       \!=\! \sigma^2_{is}\!\!\int_{\mathbb{R}^3}\!\! \int_{\mathbb{R}^3}\!\! \int_{\mathbb{S}_+^2}\!\!\! \left [ \varphi({\bfv}_i')\!-\!\varphi ({\bfv}_i) \right ] f_i^\alpha\! f_s^\alpha \left \langle \epsilon ,{\bfv}_i\!-\!{\bfv}_s \right \rangle d\epsilon d{\bfv}_s d{\bfv}_i,
\label{eq:weakform_Qisdvi}
\end{equation}
for each $i,s=1,2,3,4,$ with $i\not=s$.
\end{lem}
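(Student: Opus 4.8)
\noindent\emph{Proof strategy.} The plan is to start from the definition \eqref{eq:J_is} of $J_{is}^{\alpha}$, multiply by the test function $\varphi(\bfv_i)$ and integrate in $\bfv_i$, and then split the result into a \emph{gain} part (carrying $f_i^{\alpha'}f_s^{\alpha'}$) and a \emph{loss} part (carrying $-f_i^{\alpha}f_s^{\alpha}$). The loss part is already of the desired form, so all the work goes into rewriting the gain part so that its integrand matches that of the loss part up to the replacement $\varphi(\bfv_i)\mapsto\varphi(\bfv_i')$.

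For the gain term I would perform, at fixed $\epsilon$, the change of variables $(\bfv_i,\bfv_s)\mapsto(\bfv_i',\bfv_s')$ given by \eqref{eq:EPCV}. Two structural facts drive the computation. First, this map is a linear involution: substituting \eqref{eq:EPCV} into itself returns $(\bfv_i,\bfv_s)$, because $\mu_{is}(1/m_i+1/m_s)=1$; consequently its Jacobian has absolute value one and the velocity volume element is preserved, $d\bfv_i'\,d\bfv_s'=d\bfv_i\,d\bfv_s$. Second, the same identity yields $\bfv_i'-\bfv_s'=(\bfv_i-\bfv_s)-2\epsilon\langle\epsilon,\bfv_i-\bfv_s\rangle$, so that $\langle\epsilon,\bfv_i'-\bfv_s'\rangle=-\langle\epsilon,\bfv_i-\bfv_s\rangle$. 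Under the substitution, $f_i^{\alpha'}f_s^{\alpha'}$ becomes a product of unprimed densities evaluated at the new variables, $\varphi(\bfv_i)$ becomes $\varphi(\bfv_i')$ (again by the involution property), the kernel picks up a sign, and the admissibility constraint defining $\mathbb{S}^2_+$ is reversed to $\langle\epsilon,\bfv_i-\bfv_s\rangle<0$.

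To restore the integration set $\mathbb{S}^2_+$ I would then apply the reflection $\epsilon\mapsto-\epsilon$. This flips the constraint back to $\langle\epsilon,\bfv_i-\bfv_s\rangle>0$ and the kernel back to $+\langle\epsilon,\bfv_i-\bfv_s\rangle$, while leaving $\bfv_i'$ unchanged, since $\bfv_i'$ depends on $\epsilon$ only through the product $\epsilon\langle\epsilon,\bfv_i-\bfv_s\rangle$, which is even in $\epsilon$. After these two substitutions the gain term reads $\int\!\int\!\int_{\mathbb{S}^2_+}\varphi(\bfv_i')\,f_i^{\alpha}f_s^{\alpha}\,\langle\epsilon,\bfv_i-\bfv_s\rangle\,d\epsilon\,d\bfv_s\,d\bfv_i$, and adding the loss term and restoring the prefactor $\sigma_{is}^2$ yields \eqref{eq:weakform_Qisdvi}.

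The main obstacle is the careful bookkeeping of signs and domains in the gain term: one must track simultaneously the unit-Jacobian involution on the velocities, the sign reversal of $\langle\epsilon,\bfv_i-\bfv_s\rangle$, and the induced change of the domain $\mathbb{S}^2_+$, and then verify that the single reflection $\epsilon\mapsto-\epsilon$ repairs all of these at once while fixing $\bfv_i'$. Establishing that the Jacobian equals one --- most cleanly via the involution identity rather than a direct $6\times6$ determinant computation --- is the technical heart of the argument.
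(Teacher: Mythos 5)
Your argument is correct and is precisely the ``standard argument'' the paper invokes without writing out (it refers the reader to Boudin--Grec--Salvarani and Cercignani for Lemmas~\ref{lem:weakform_MS} and~\ref{lem:weakform_BS}): the unit-Jacobian involution $(\bfv_i,\bfv_s)\mapsto(\bfv_i',\bfv_s')$ on the gain term, the sign flip $\langle\epsilon,\bfv_i'-\bfv_s'\rangle=-\langle\epsilon,\bfv_i-\bfv_s\rangle$, and the reflection $\epsilon\mapsto-\epsilon$ to restore $\mathbb{S}^2_+$ are all handled correctly. No gaps.
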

%%%%%%%%%%%%%%%%%%%%%%%%%%%%%%%%%%%%%%%%%%%%%%%%%%%%%%%%%%%%%%%%%%%%%%%%%
The proofs of lemmas \ref{lem:weakform_MS} and \ref{lem:weakform_BS} follow  from  standard arguments, see for example \cite{bou-gre-sal-15, Carlo}.
%%%%%%%%%%%%%%%%%%%%%%%%%%%%%%%%%%%%%%%%%%%%%%%%%%%%%%%%%%%%%%%%%%%%%%%%%
\begin{cor}
   \label{cor:monospecies_MMKE_con}
The  mono-species  collision  operator $J_i^\alpha$ is such that, for $i=1,2,3,4$,
\begin{equation}
\int_{\mathbb{R}^3}J_i^\alpha
\begin{pmatrix}1\\ 
m_i{\bf{v}}_i\\ 
\frac{1}{2}m_i({\bf{v}}_i)^2
\end{pmatrix}d{\bf{v}}_i =0.
\label{eq:Mconservation_MMKE}
\end{equation}
\end{cor}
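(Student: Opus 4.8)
The plan is to read off the result directly from the weak form of the mono-species collision operator established in Lemma~\ref{lem:weakform_MS}, testing it successively against the three classical collision invariants that appear in the column vector of \eqref{eq:Mconservation_MMKE}, namely $\varphi(\bfv_i) = 1$, $\varphi(\bfv_i) = m_i\bfv_i$ (handled component by component, since the weak form is stated for scalar test functions), and $\varphi(\bfv_i) = \tfrac{1}{2}m_i\|\bfv_i\|^2$. The key observation is that in the symmetrised integrand of \eqref{eq:weakform_MSECO} the test function enters only through the bracket
\begin{equation*}
\varphi(\bfv_i) + \varphi(\bfv_{i*}) - \varphi(\bfv'_i) - \varphi(\bfv'_{i*}),
\end{equation*}
so it suffices to show that this combination vanishes pointwise for each of the three choices; the triple integral then annihilates regardless of the sign-changing factor $\langle \epsilon, \bfv_i - \bfv_{i*}\rangle$ and the difference of products of distributions.

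First I would treat $\varphi \equiv 1$, for which the bracket is $1+1-1-1 = 0$ trivially. Next, for the momentum invariant $\varphi(\bfv_i) = m_i\bfv_i$, I would substitute into the bracket to obtain $m_i\bfv_i + m_i\bfv_{i*} - m_i\bfv'_i - m_i\bfv'_{i*}$, which is exactly zero by the conservation of linear momentum for mono-species collisions stated in \eqref{eq:ECOM_mono}; running this argument over each Cartesian component gives the vanishing of all three entries of the momentum integral. Finally, for the energy invariant $\varphi(\bfv_i) = \tfrac{1}{2}m_i\|\bfv_i\|^2$, the bracket becomes $\tfrac{1}{2}m_i\big(\|\bfv_i\|^2 + \|\bfv_{i*}\|^2 - \|\bfv'_i\|^2 - \|\bfv'_{i*}\|^2\big)$, which vanishes by the conservation of kinetic energy \eqref{eq:ECOKE_mono}. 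In each case the integrand in \eqref{eq:weakform_MSECO} is identically zero, so the integral is zero, yielding \eqref{eq:Mconservation_MMKE}.

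There is essentially no analytic obstacle here once Lemma~\ref{lem:weakform_MS} is in hand: the statement is a purely algebraic consequence of the microscopic conservation laws \eqref{eq:ECOM_mono}--\eqref{eq:ECOKE_mono}, and the symmetrisation that produces the four-term bracket has already been absorbed into the weak form. The only points requiring mild care are bookkeeping ones, namely applying the scalar weak form component-wise for the vector-valued momentum invariant, and ensuring that the integrability assumptions implicit in Lemma~\ref{lem:weakform_MS} (sufficient decay of $f_i^\alpha$ so that the collision integrals converge and the symmetrisation is legitimate) are in force. I would state these as standing assumptions rather than dwell on them, since the corollary is intended as a structural identity. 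Consequently the proof reduces to the three one-line verifications sketched above.
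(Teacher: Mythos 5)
Your proposal is correct and follows exactly the paper's own argument: the paper likewise proves the corollary by applying the weak form of Lemma~\ref{lem:weakform_MS} with $\varphi(\bfv_i)=1$, $\varphi(\bfv_i)=m_i\bfv_i$, and $\varphi(\bfv_i)=\tfrac{1}{2}m_i(\bfv_i)^2$, invoking the mono-species conservation laws \eqref{eq:ECOM_mono} and \eqref{eq:ECOKE_mono} to make the four-term bracket vanish. Your write-up merely spells out the pointwise cancellation and the component-wise handling of the vector invariant, which the paper leaves implicit.
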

\begin{proof}
\smartqed
The proof follows from lemma \ref{lem:weakform_MS} by setting $\varphi ({\bf{v}}_i)=1$, $\varphi ({\bf{v}}_i)=m_i {\bf{v}}_i$ and using the mono-species conservation of momentum \eqref{eq:ECOM_mono}, $\varphi ({\bf{v}}_i)=\frac{1}{2}m_i ({\bf{v}}_i)^2$ and using the mono-species conservation of energy \eqref{eq:ECOKE_mono}, respectively.
\qed
\end{proof}
%%%%%%%%%%%%%%%%%%%%%%%%%%%%%%%%%%%%%%%%%%%%%%%%%%%%%%%%%%%%%%%%%%%%%%%%%
\begin{cor}
\label{cor:Bconservation_MMKE}
The  bi-species  collision  operator $J_{is}^\alpha$ is such that, for  $i,s=1,2,3,4,$ with $i\not=s$,
\begin{equation}
\begin{aligned}
&\int_{\mathbb{R}^3}J^\alpha_{is}\,d{\bf{v}}_i=0,
\\
 &\int_{\mathbb{R}^3}J^\alpha_{is}
              \begin{pmatrix}
             m_i{\bf{v}}_i\\ 
             \frac{1}{2}m_i({\bf{v}}_i)^2
             \end{pmatrix}\,d{\bf{v}}_i+ \int_{\mathbb{R}^3}J^\alpha_{si}
             \begin{pmatrix} 
             m_s{\bf{v}}_s\\ 
             \frac{1}{2}m_s({\bf{v}}_s)^2
             \end{pmatrix}\,d{\bf{v}}_s=0.
\end{aligned}
\label{eq:Bconservation_MMKE}
\end{equation}
\end{cor}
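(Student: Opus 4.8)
The plan is to read off both identities from the weak form of the bi-species operator in Lemma~\ref{lem:weakform_BS}, choosing the right test functions and invoking the microscopic conservation laws. For the mass identity I would set $\varphi({\bfv}_i)=1$ in \eqref{eq:weakform_Qisdvi}. Then $\varphi({\bfv}_i')-\varphi({\bfv}_i)=0$ identically, the entire right-hand side vanishes, and $\int_{\mathbb{R}^3}J_{is}^\alpha\,d{\bfv}_i=0$ follows immediately.

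The momentum and kinetic-energy identities require combining the weak forms of $J_{is}^\alpha$ and $J_{si}^\alpha$. I would apply Lemma~\ref{lem:weakform_BS} to $J_{is}^\alpha$ with $\varphi({\bfv}_i)=m_i{\bfv}_i$ (respectively $\tfrac{1}{2}m_i({\bfv}_i)^2$) and to $J_{si}^\alpha$ with $\varphi({\bfv}_s)=m_s{\bfv}_s$ (respectively $\tfrac{1}{2}m_s({\bfv}_s)^2$). The $J_{is}^\alpha$ integral runs over the half-sphere $\langle\epsilon,{\bfv}_i-{\bfv}_s\rangle>0$, while the $J_{si}^\alpha$ integral runs over $\langle\epsilon,{\bfv}_s-{\bfv}_i\rangle>0$, so the two cannot yet be merged. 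The crucial step is the change of variables $\epsilon\mapsto-\epsilon$ in the $J_{si}^\alpha$ integral: since $\epsilon\mapsto\epsilon\langle\epsilon,\cdot\rangle$ is even, the post-collisional velocity ${\bfv}_s'$ is unaffected and—using $\mu_{si}=\mu_{is}$—coincides with the standard $(i,s)$ expression in \eqref{eq:EPCV}, whereas the reflection carries the domain $\langle\epsilon,{\bfv}_s-{\bfv}_i\rangle>0$ onto $\langle\epsilon,{\bfv}_i-{\bfv}_s\rangle>0$ and turns the weight $\langle\epsilon,{\bfv}_s-{\bfv}_i\rangle$ into $\langle\epsilon,{\bfv}_i-{\bfv}_s\rangle$. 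After this substitution both integrals are taken over the same half-sphere against the same measure, and because $\sigma_{is}^2=\sigma_{si}^2$ and $f_i^\alpha f_s^\alpha=f_s^\alpha f_i^\alpha$, their sum for the momentum component reads
\[
\sigma_{is}^2\int_{\mathbb{R}^3}\!\int_{\mathbb{R}^3}\!\int_{\mathbb{S}_+^2}\Big([m_i{\bfv}_i'-m_i{\bfv}_i]+[m_s{\bfv}_s'-m_s{\bfv}_s]\Big)f_i^\alpha f_s^\alpha\,\langle\epsilon,{\bfv}_i-{\bfv}_s\rangle\,d\epsilon\,d{\bfv}_s\,d{\bfv}_i,
\]
with the analogous integrand $\tfrac{1}{2}m_i[({\bfv}_i')^2-({\bfv}_i)^2]+\tfrac{1}{2}m_s[({\bfv}_s')^2-({\bfv}_s)^2]$ for the energy component. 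Conservation of linear momentum \eqref{eq:ECOM} makes the first bracketed sum vanish pointwise, and conservation of kinetic energy \eqref{eq:ECOKE} annihilates the second, which yields \eqref{eq:Bconservation_MMKE}.

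I expect the only delicate point to be the bookkeeping under $\epsilon\mapsto-\epsilon$: one must track carefully how the weight and the integration domain flip sign while verifying that the $(s,i)$ post-collisional velocity ${\bfv}_s'={\bfv}_s-2\tfrac{\mu_{si}}{m_s}\epsilon\langle\epsilon,{\bfv}_s-{\bfv}_i\rangle$ is carried onto the standard $(i,s)$ form of \eqref{eq:EPCV}, so that the two integrands can legitimately be placed under a common integral sign before the microscopic conservation laws are applied.
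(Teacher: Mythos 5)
Your proposal is correct and follows essentially the same route as the paper: apply the weak form of Lemma~\ref{lem:weakform_BS} with $\varphi=1$, $\varphi=m_i\bfv_i$ and $\varphi=\tfrac12 m_i(\bfv_i)^2$, then invoke the microscopic conservation laws \eqref{eq:ECOM} and \eqref{eq:ECOKE}. The only difference is that you spell out the step the paper leaves to ``standard arguments''---the reflection $\epsilon\mapsto-\epsilon$ that puts the $J_{si}^\alpha$ integral over the same half-sphere and weight as the $J_{is}^\alpha$ one---and you do this bookkeeping correctly.
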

%%%%%%%%%%%%%%%%%%%%%%%%%%%%%%%%%%%%%%%%%%%%%%%%%%%%%%%%%%%%%%%%%%%%%%%%%
\begin{proof}
\smartqed
The proof follows from lemma \ref{lem:weakform_BS} by setting $\varphi ({\bf{v}}_i)=1$, $\varphi ({\bf{v}}_i)=m_i {\bf{v}}_i$ and using the bi-species conservation of momentum  \eqref{eq:ECOM}, $\varphi ({\bf{v}}_i)=\frac{1}{2}m_i ({\bf{v}}_i)^2$ and using the bi-species conservation of energy \eqref{eq:ECOKE}, respectively.
\qed
\end{proof}
%%%%%%%%%%%%%%%%%%%%%%%%%%%%%%%%%%%%%%%%%%%%%%%
\subsection{Assumptions}In order to derive the target equations (that is a purely diffusion model where the diffusion process is governed by the MS equations), we will assume that
\begin{enumerate}[\it(a)]
\item The temperature  of the mixture $T$ is uniform in space and constant in time.
\item The bulk velocity  of the  mixture ${\bf{u}}^\alpha$ is small and goes to zero as $\alpha \to 0$.
\item The initial conditions are local Maxwellians centered at the average velocity of the species ${\bf{u}}_i^\alpha$. More precisely the distribution functions at time $t=0$ are of the form
\begin{equation*}
              f_i^{\alpha(in)} \! ({\bfx},{\bfv}_i)\! =\! c_i^{\alpha(in)} \! ({\bfx}) \!\!
              \left ( \! \frac{m_i}{2\pi k_BT} \! \right )^{\!\!\frac{3}{2}}\!\! 
             \exp \! \left[-\frac{m_i \Big( \! {\bfv}_i \!-\! \alpha{\bfu}^{\alpha(in)}_i({\bfx}) \! \Big)^{\!2}}{2k_BT}\right] \! ,
              \; {\bfx}\in \Omega, \; {\bfv}_i\in \mathbb{R}^3,
\label{eq:local_max}
\end{equation*}
where $k_B$ is the Boltzmann's constant, $c_i^\alpha$ is the concentration of the species which is such that  $c_i^{\alpha(in)}:  \,\Omega\rightarrow {\mathbb{R}}_+$ 
and ${\bf{u}}_i^{\alpha (in)} \! : \Omega\rightarrow {\mathbb{R}^3}$, for $i=1, \ldots ,4$.
\item The evolution of the system leaves the distribution functions in the local Maxwellian state. More precisely, the distribution functions at time $t>0$ is of the form
\begin{equation}
f_i^{\alpha}(t,{\bfx},{\bfv}_i) =
       c_i^{\alpha}(t,{\bfx})
       \left ( \! \frac{m_i}{2\pi k_BT} \! \right )^{\!\!\frac{3}{2}} \!
       \exp \! \left[-\frac{m_i \Big( \! {\bfv}_i \!-\! \alpha{\bfu}^{\alpha}_i(t,{\bfx}) \! \Big)^2}{2k_BT}\right], {\bfx}\in \Omega, \; {\bfv}_i\in \mathbb{R}^3
\label{eq:local_max1}
\end{equation}
where  $c_i^{\alpha}:{\mathbb{R}}_+ \times \Omega\rightarrow {\mathbb{R}}_+$ 
and $ {\bf{u}}_i^{\alpha}:{\mathbb{R}}_+ \times \Omega\rightarrow {\mathbb{R}^3}$,
for $i=1,\ldots4$.
\end{enumerate}
Assumption (a)  is the isothermal condition and it allows us to neglect effects due to temperature gradient. Assumption (b) allows us to neglect convective effects. Assumptions (c) and (d) represents a physical situation in which the system evolves not far away from the local Maxwellian Equilibrium.
%%%%%%%%%%%%%%%%%%%%%%%%%%%%%%%%%%%%%%%%%%%%%%%%%%%%%%%%%%%%%%%%%%%%%%%%%
\begin{lem}
\label{lem:property}
As a consequence of \eqref{eq:local_max1}, we have that
 \begin{multline}
     {f}_i^{\alpha'}{f_s}^{\alpha'}-f^\alpha_if^\alpha_s=M_i^\alpha M_s^\alpha\big( \alpha {\ba}_{is}\cdot({\bfv}_i'-{\bfv}_i)+\alpha^2{\ba}_{is}\cdot({\bfv}_i'-{\bfv}_i)  ({\ba}_i\cdot{\bfv}_i)
     \\
     +\alpha^2 {\ba}_{is}\cdot ({\bfv}_i'-{\bfv}_i) ({\ba}_s\cdot{\bfv}_s)+\frac{\alpha^2}{2}\big({\ba}_{is}\cdot ({\bfv}_i'-{\bfv}_i)\big)^2+O(\alpha^3) \big), 
     \label{eq:f_prime_minus_f}
     \end{multline}
where
\begin{equation}
\begin{aligned}
&{\ba}_i=\frac{m_i{\bfu}_i}{k_BT}, \quad {\ba}_s=\frac{m_s{\bfu}_s}{k_BT}, \quad {\ba}_{is}=\frac{ m_i({\bfu}_i^\alpha-{\bfu}_s^\alpha)}{k_BT},
\\[-0.1em]
&M_i^\alpha\!\!=\!\!c_i^{\alpha}\!\bigg(\! \frac{m_i}{2\pi k_BT} \!\bigg)^\frac{3}{2}\!\!\!\exp\!\bigg(\!\!-\!\frac{m_i({\bfv}_i)^2}{2k_BT}\!\!\bigg), \,\, M_s^\alpha\!\!=\!\!c_s^{\alpha}\!\bigg(\! \frac{m_s}{2\pi k_BT} \!\bigg)^\frac{3}{2}\!\!\exp\!\bigg(\!\!-\!\frac{m_s({\bfv}_s)^2}{2k_BT}\!\bigg).
\label{eq:definitions}
\end{aligned}
\end{equation}
\end{lem}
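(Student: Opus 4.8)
The plan is to rewrite each local Maxwellian \eqref{eq:local_max1} as the zero-velocity Maxwellian $M_i^\alpha$ of \eqref{eq:definitions} times an exponential whose exponent is, apart from a velocity-independent piece, linear in ${\bfv}_i$ and of order $\alpha$. Expanding the square in the exponent of \eqref{eq:local_max1} gives
\[
-\frac{m_i({\bfv}_i-\alpha{\bfu}_i^\alpha)^2}{2k_BT}
=-\frac{m_i({\bfv}_i)^2}{2k_BT}+\alpha\,{\ba}_i\cdot{\bfv}_i-\frac{\alpha^2 m_i({\bfu}_i^\alpha)^2}{2k_BT},
\]
with ${\ba}_i$ as in \eqref{eq:definitions}, so that $f_i^\alpha=M_i^\alpha\exp\!\big(\alpha\,{\ba}_i\cdot{\bfv}_i-\tfrac{\alpha^2 m_i({\bfu}_i^\alpha)^2}{2k_BT}\big)$, and similarly for $f_s^\alpha$ with ${\ba}_s$. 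This first step is purely algebraic and isolates the small parameter in the exponent.

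Next I would form the product and its primed counterpart. Writing $S:={\ba}_i\cdot{\bfv}_i+{\ba}_s\cdot{\bfv}_s$ and the velocity-independent quantity $R:=\tfrac{1}{k_BT}\big(m_i({\bfu}_i^\alpha)^2+m_s({\bfu}_s^\alpha)^2\big)$, one has $f_i^\alpha f_s^\alpha=M_i^\alpha M_s^\alpha\exp\!\big(\alpha S-\tfrac{\alpha^2}{2}R\big)$. The \emph{crucial observation} is that $M_i^\alpha M_s^\alpha$ depends on the microscopic velocities only through $m_i({\bfv}_i)^2+m_s({\bfv}_s)^2$, so the conservation of kinetic energy \eqref{eq:ECOKE} yields $M_i^{\alpha'}M_s^{\alpha'}=M_i^\alpha M_s^\alpha$; moreover $R$ is velocity-independent and hence identical in the primed and unprimed products. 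Therefore $f_i^{\alpha'}f_s^{\alpha'}=M_i^\alpha M_s^\alpha\exp\!\big(\alpha S'-\tfrac{\alpha^2}{2}R\big)$ with $S':={\ba}_i\cdot{\bfv}_i'+{\ba}_s\cdot{\bfv}_s'$, and the common prefactor $M_i^\alpha M_s^\alpha$ factors out of the difference.

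It then remains to expand
\[
f_i^{\alpha'}f_s^{\alpha'}-f_i^\alpha f_s^\alpha
=M_i^\alpha M_s^\alpha\,e^{-\frac{\alpha^2}{2}R}\,e^{\alpha S}\big(e^{\alpha(S'-S)}-1\big).
\]
Here I would invoke the conservation of momentum \eqref{eq:ECOM}, in the form $m_i({\bfv}_i'-{\bfv}_i)=-m_s({\bfv}_s'-{\bfv}_s)$, together with the definitions \eqref{eq:definitions} to collapse $S'-S={\ba}_i\cdot({\bfv}_i'-{\bfv}_i)+{\ba}_s\cdot({\bfv}_s'-{\bfv}_s)$ into the single term ${\ba}_{is}\cdot({\bfv}_i'-{\bfv}_i)$. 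Taylor expanding $e^{\alpha(S'-S)}-1=\alpha\,{\ba}_{is}\cdot({\bfv}_i'-{\bfv}_i)+\tfrac{\alpha^2}{2}\big({\ba}_{is}\cdot({\bfv}_i'-{\bfv}_i)\big)^2+O(\alpha^3)$ and $e^{\alpha S}=1+\alpha S+O(\alpha^2)$, and noting that $e^{-\frac{\alpha^2}{2}R}=1+O(\alpha^2)$ perturbs the leading $\alpha$-term only at order $\alpha^3$, I collect the contributions: the linear term $\alpha\,{\ba}_{is}\cdot({\bfv}_i'-{\bfv}_i)$, the cross term $\alpha^2 S\,{\ba}_{is}\cdot({\bfv}_i'-{\bfv}_i)$ which splits into the ${\ba}_i\cdot{\bfv}_i$ and ${\ba}_s\cdot{\bfv}_s$ pieces, and the quadratic term $\tfrac{\alpha^2}{2}\big({\ba}_{is}\cdot({\bfv}_i'-{\bfv}_i)\big)^2$, matching exactly \eqref{eq:f_prime_minus_f}.

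The main obstacle, and the only place where genuine care is needed, is the bookkeeping of the $O(\alpha^2)$ terms and the verification that the velocity-independent exponent $-\tfrac{\alpha^2}{2}R$ does not contaminate the displayed coefficients. This is precisely where the two conservation laws carry the weight: energy conservation removes the quadratic-in-velocity prefactor $M_i^{\alpha'}M_s^{\alpha'}$, and momentum conservation telescopes the linear exponents into the single combination ${\ba}_{is}\cdot({\bfv}_i'-{\bfv}_i)$. Everything else is a controlled Taylor expansion in $\alpha$.
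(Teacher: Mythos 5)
Your proof is correct and follows essentially the same route as the paper's: energy conservation to dispose of the quadratic velocity terms, momentum conservation to collapse the linear exponents into the single combination ${\ba}_{is}\cdot({\bfv}_i'-{\bfv}_i)$, and a controlled Taylor expansion in $\alpha$. The only difference is organizational --- you factor out the centered Maxwellians $M_i^\alpha M_s^\alpha$ at the outset and work with the exponents $S$, $S'$ directly, whereas the paper first establishes ${f}_i^{\alpha'}{f}_{s}^{\alpha'}= {f}_i^{\alpha}{f}_{s}^{\alpha}\exp\big[\alpha\,{\ba}_{is}\cdot({\bfv}_i'-{\bfv}_i)\big]$ and only afterwards expands $f_i^\alpha f_s^\alpha$ about $M_i^\alpha M_s^\alpha$; the underlying algebra is identical.
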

%%%%%%%%%%%%%%%%%%%%%%%%%%%%%%%%%%%%%%%%%%%%%%%%%%%%%%%%%%%%%%%%%%%%%%%%%
\begin{proof}
\smartqed
Using  equation \eqref{eq:local_max1}, we can write
\begin{align}
\nonumber{f}_i^{\alpha'}\!{f}_{s}^{\alpha'}\!&=\!c_i^{\alpha}\!\left ( \! \frac{m_i}{2\pi k_BT} \! \right )^{\!\!\frac{3}{2}} \!\!\exp \! \left[\!-\!\frac{m_i \Big( \! {\bfv}_i' \!-\! \alpha{\bfu}^{\alpha}_i \! \Big)^2}{2k_BT}\!\right] \!c_s^{\alpha}\!\left ( \! \frac{m_s}{2\pi k_BT} \! \right )^{\!\!\frac{3}{2}} \!\!\exp \! \left[\!-\!\frac{m_s \Big( \! {\bfv}_s' \!-\! \alpha{\bfu}^{\alpha}_s \! \Big)^2}{2k_BT}\!\right]
\\
&\!=\!c_i^{\alpha}c_s^{\alpha}  \frac{(m_i m_s)^{\frac{3}{2}}}{(2\pi k_BT)^3}\exp \!\! \left[\!-\Bigg(\!\frac{{m_i \Big( \! {\bfv}_i' \!-\! \alpha{\bfu}^{\alpha}_i \! \Big)^2}\!+\!{m_s \Big( \! {\bfv}_s' \!-\! \alpha{\bfu}^{\alpha}_s \! \Big)^2}}{2k_B T}\!\Bigg)\!\right].
\label{eq:ff_prime}
\end{align}
Observe that
\begin{align*}
m_i\left ( {\bfv}_i'-\alpha{\bfu}_i^\alpha \right )^2+m_s\left ( {\bfv}_s'-\alpha{\bfu}_s^\alpha \right )^2 &=m_i\left \{ {\bfv}_i'({\bfv}_i'-\alpha{\bfu}_i^\alpha)-\alpha{\bfu}_i^\alpha({\bfv}_i'-\alpha{\bfu}_i^\alpha) \right \}
\\
&+m_s\left \{ {\bfv}_s'({\bfv}_s'-\alpha{\bfu}_s^\alpha)-\alpha{\bfu}_s^\alpha({\bfv}_s'-\alpha{\bfu}_s^\alpha) \right \}
\\
&=m_i\left \{ ({\bfv}_i')^2-2\alpha{\bfu}_i^\alpha\cdot{\bfv}_i'+(\alpha{\bfu}_i^\alpha)^2 \right \}
\\
&+m_s\left \{ ({\bfv}_s')^2-2\alpha{\bfu}_s^\alpha\cdot{\bfv}_s'+(\alpha{\bfu}_s^\alpha)^2 \right \}
\\
& =m_i ({\bfv}_i')^2-2\alpha m_i{\bfu}_i^\alpha\cdot{\bfv}_i'+m_i(\alpha{\bfu}_i^\alpha)^2 
\\
&+m_s ({\bfv}_s')^2-2\alpha m_s{\bfu}_s^\alpha\cdot{\bfv}_s'+ m_s(\alpha{\bfu}_s^\alpha)^2
\\
&=m_i ({\bfv}_i')^2+m_s ({\bfv}_s')^2 +m_i(\alpha{\bfu}_i^\alpha)^2
\\
&+m_s(\alpha{\bfu}_s^\alpha)^2-2\alpha m_i{\bfu}_i^\alpha\cdot{\bfv}_i' -2\alpha m_j{\bfu}_s^\alpha\cdot{\bfv}_s'.
\end{align*}
 Using the bi-species conservation of kinetic energy \eqref{eq:ECOKE}, we obtain
 \begin{align*}
m_i\left ( {\bfv}_i'-\alpha{\bfu}_i^\alpha \right )^2+ m_s\left ( {\bfv}_s'-\alpha{\bfu}_s^\alpha \right )^2 &=m_i {\bfv}_i^2+m_s {\bfv}_s^2+m_i(\alpha{\bfu}_i^\alpha)^2
\\
&+m_s(\alpha{\bfu}_s^\alpha)^2 -2\alpha m_i{\bfu}_i^\alpha\cdot{\bfv}_i' -2\alpha m_s{\bfu}_s^\alpha\cdot{\bfv}_s'
\\
&=m_i\left ( {\bfv}_i-\alpha{\bfu}_i^\alpha \right )^2+m_s\left ( {\bfv}_s-\alpha{\bfu}_s^\alpha \right )^2
\\
&+2\alpha\big( m_i{\bfu}_i^\alpha\!\cdot\!{\bfv}_i\!-\!m_i{\bfu}_i^\alpha\!\cdot\!{\bfv}_i'\!+\! m_s{\bfu}_s^\alpha\!\cdot\!{\bfv}_s \!-\! m_s{\bfu}_s^\alpha\!\cdot\!{\bfv}_s'\big)
\\
 &=m_i\left ( {\bfv}_i-\alpha{\bfu}_i^\alpha \right )^2+m_s\left ( {\bfv}_s-\alpha{\bfu}_s^\alpha \right )^2
 \\
 &-2\alpha m_i{\bfu}_i^\alpha \cdot( {\bfv}_i'-{\bfv}_i)-2\alpha m_s{\bfu}_s^\alpha\cdot({\bfv}_s'-{\bfv}_s).
 \end{align*}
  Using the bi-species conservation of momentum \eqref{eq:ECOM} rewritten as
     \begin{equation*}
      m_i({\bfv}_i'-{\bfv}_i)=-m_s({\bfv}_s'-{\bfv}_s),
     \end{equation*}
we obtain
 \begin{align}
\nonumber m_i\left ( {\bfv}_i'-\alpha{\bfu}_i^\alpha \right )^2+ m_s\left ( {\bfv}_s'-\alpha{\bfu}_s^\alpha \right )^2 &= m_i\left ( {\bfv}_i-\alpha{\bfu}_i^\alpha \right )^2+ m_s\left ( {\bfv}_s-\alpha{\bfu}_s^\alpha \right )^2
\\
\nonumber&-2\alpha m_i {\bfu}_i^\alpha \cdot( {\bfv}_i'-{\bfv}_i)+2\alpha m_i{\bfu}_s^\alpha \cdot({\bfv}_i'-{\bfv}_i)
\\
\nonumber& =m_i\left ( {\bfv}_i-\alpha{\bfu}_i^\alpha \right )^2+ m_s\left ( {\bfv}_s-\alpha{\bfu}_s^\alpha \right )^2
\\
&-2\alpha m_i({\bfv}_i'-{\bfv}_i)\cdot  ({\bfu}_i^\alpha-{\bfu}_s^\alpha).
\label{eq: quadratic_terms_expansion}
 \end{align}
Substituting \eqref{eq: quadratic_terms_expansion} into \eqref{eq:ff_prime}, we obtain
\begin{multline*}
{f}_i^{\alpha'}{f}_{s}^{\alpha'}\!=c_i^{\alpha}c_s^{\alpha}  \frac{(m_i m_s)^{\frac{3}{2}}}{(2\pi k_BT)^3}
\\
 \times \exp \bigg[\!-\Bigg(\!\frac{m_i\left ( {\bfv}_i-\alpha{\bfu}_i^\alpha \right )^2+ m_s\left ( {\bfv}_s-\alpha{\bfu}_s^\alpha \right )^2-2\alpha m_i({\bfv}_i'-{\bfv}_i)\cdot  ({\bfu}_i^\alpha-{\bfu}_s^\alpha)}{2k_B T}\!\Bigg)\!\bigg].
\end{multline*}
Using the distribution function \eqref{eq:local_max1}, we obtain
\begin{equation}
 {f}_i^{\alpha'}{f}_{s}^{\alpha'}= {f}_i^{\alpha}{f}_{s}^{\alpha}\exp \bigg[\frac{\alpha m_i({\bfv}_i'-{\bfv}_i)\cdot ({\bfu}_i^\alpha-{\bfu}_s^\alpha)}{k_BT} \bigg]. 
 \label{eq:ff_prime_final}
\end{equation}
Now, using \eqref{eq:ff_prime_final}, we can write
\begin{align}
\nonumber{f}_i^{\alpha'}\!\!{f_s}^{\alpha'}\!-\!f^\alpha_i\!\!f^\alpha_s&\!= {f}_i^{\alpha}{f}_{s}^{\alpha}\exp \bigg[\frac{\alpha m_i({\bfv}_i'-{\bfv}_i)\cdot ({\bfu}_i^\alpha-{\bfu}_s^\alpha)}{k_BT} \bigg] -{f}_i^{\alpha}{f}_{s}^{\alpha}
\\
&=  {f}_i^{\alpha}{f}_{s}^{\alpha}\bigg(\exp \bigg[\frac{\alpha m_i({\bfv}_i'-{\bfv}_i)\cdot ({\bfu}_i^\alpha-{\bfu}_s^\alpha)}{k_BT} \bigg] -1\bigg).
\label{eq:fprime_minus_f}
\end{align}
Taylor expanding the exponential term in \eqref{eq:fprime_minus_f} above with respect to $\alpha$ gives
 \begin{multline}
     \exp \bigg[{\frac{\alpha m_i({\bfv}_i'-{\bfv}_i)\cdot ({\bfu}_i^\alpha-{\bfu}_s^\alpha)}{k_B T}}\bigg]=1+\frac{\alpha m_i({\bfv}_i'-{\bfv}_i)\cdot ({\bfu}_i^\alpha-{\bfu}_s^\alpha)}{k_B T}
     \\
     +\frac{1}{2}\left ( \frac{\alpha m_i({\bfv}_i'-{\bfv}_i)\cdot ({\bfu}_i^\alpha-{\bfu}_s^\alpha)}{k_B T} \right )^2+O(\alpha^3) .
     \label{eq:Exp_expansion1}
     \end{multline}
     Substituting \eqref{eq:Exp_expansion1} into \eqref{eq:fprime_minus_f} gives
     \begin{align}
   \nonumber {f}_i^{\alpha'}\!\!&{f_s}^{\alpha'}\!-\!f^\alpha_i\!\!f^\alpha_s
   \\
   \nonumber&\!\!={f}_i^{\alpha}{f}_{s}^{\alpha}\bigg(1+\frac{\alpha m_i({\bfv}_i'-{\bfv}_i)\cdot ({\bfu}_i^\alpha-{\bfu}_s^\alpha)}{k_BT}+\frac{1}{2}\left ( \frac{\alpha m_i({\bfv}_i'-{\bfv}_i)\cdot ({\bfu}_i^\alpha-{\bfu}_s^\alpha)}{k_B T} \right )^2
   \\
   \nonumber&+O(\alpha^3)  -1\bigg)
    \\
    &\!\!=\!{f}_i^{\alpha}\!{f}_{s}^{\alpha}\!\bigg(\!\frac{\alpha m_i({\bfv}_i'\!-\!{\bfv}_i)\!\cdot\! ({\bfu}_i^\alpha\!-\!{\bfu}_s^\alpha)}{k_B T}\!+\!\frac{1}{2}\!\left (\! \frac{\alpha m_i({\bfv}_i'\!-\!{\bfv}_i)\!\cdot\! ({\bfu}_i^\alpha\!-\!{\bfu}_s^\alpha)}{k_B T} \!\right )^{\!\!2}\!+\!O(\alpha^3)\! \bigg).
    \label{eq:ff}
     \end{align}
Observe that 
\begin{align*}
    \exp\bigg[-\frac{m_i({\bfv}_i-\alpha{\bfu}^{\alpha}_i)^2}{2k_BT}\bigg]&=\exp \bigg[-\frac{m_i\left \{ ({\bfv}_i)^2-2\alpha{\bfu}^{\alpha}_i\cdot{\bfv}_i+(\alpha{\bfu}_i^\alpha)^2 \right \}}{2k_BT}\bigg]
    \\
    &=\exp \bigg[-\frac{m_i({\bfv}_i)^2}{2k_BT}\bigg] \,\exp\bigg[ \frac{m_i\alpha{\bfu}_i^\alpha\cdot{\bfv}_i}{k_BT} \bigg] \,\exp\bigg[-\frac{m_i(\alpha{\bfu}_i^\alpha)^2}{2k_BT}\bigg].
    \end{align*}
Taylor expanding the last two terms on the right hand side of the previous equation with respect to $\alpha$ gives
\begin{align*}
&\exp\bigg[ \frac{m_i\alpha{\bfu}_i^\alpha\cdot{\bfv}_i}{k_BT} \bigg]= 1+\alpha\frac{m_i{\bfu}_i^\alpha\cdot{\bfv}_i}{k_BT}+ O(\alpha^2),
\\
&\exp\bigg[-\frac{m_i(\alpha{\bfu}_i^\alpha)^2}{2k_BT}\bigg]= 1+ O(\alpha^2).  
\end{align*}
Therefore, 
\begin{align*}
 \exp\bigg[\!-\!\frac{m_i({\bfv}_i\!-\!\alpha{\bfu}^{\alpha}_i)^2}{2k_BT}\!\bigg]&\!=\!\exp \bigg[-\frac{m_i({\bfv}_i)^2}{2k_BT}\bigg]\bigg(1+\alpha\frac{m_i{\bfu}_i^\alpha\cdot{\bfv}_i}{k_BT}+ O(\alpha^2)\bigg) \big(1+ O(\alpha^2)\big)
 \\
 &=\exp \bigg[-\frac{m_i({\bfv}_i)^2}{2k_BT}\bigg]\bigg(1+\alpha\frac{m_i{\bfu}_i^\alpha\cdot{\bfv}_i}{k_BT}+ O(\alpha^2)\bigg).
\end{align*}
Thus,
\begin{equation*}
f_i^{\alpha}=c_i^{\alpha}\bigg( \frac{m_i}{2\pi k_BT} \bigg)^\frac{3}{2}\exp\bigg[-\frac{m_i({\bfv}_i)^2}{2k_BT}\bigg]\left ( 1+\alpha\frac{m_i{\bfu}_i\cdot{\bfv}_i}{k_BT}+O(\alpha^2) \right ).
\end{equation*}
Similarly,
\begin{equation*}
f_s^{\alpha}=c_s^{\alpha}\bigg( \frac{m_i}{2\pi k_BT} \bigg)^\frac{3}{2}\exp\bigg[-\frac{m_s({\bfv}_s)^2}{2k_BT}\bigg]\left ( 1+\alpha\frac{m_s{\bfu}_s\cdot{\bfv}_s}{k_BT}+O(\alpha^2) \right ).
\end{equation*}
Taking the product of $f_i^\alpha$ and $f_s^\alpha$, we obtain
\begin{multline}
f_i^{\alpha}f_s^{\alpha}=c_i^{\alpha}\bigg( \frac{m_i}{2\pi k_BT} \bigg)^\frac{3}{2}c_s^{\alpha}\bigg( \frac{m_i}{2\pi k_BT} \bigg)^\frac{3}{2}\exp\bigg[-\frac{m_i({\bfv}_i)^2}{2k_BT}\bigg]
\\
\times \exp\bigg[-\frac{m_s({\bfv}_s)^2}{2k_BT}\bigg]\bigg(1+\alpha\frac{m_i{\bfu}_i\cdot{\bfv}_i}{k_BT}+\alpha\frac{m_s{\bfu}_s\cdot{\bfv}_s}{k_BT}+O(\alpha^2)\bigg).
\label{eq:ff_final}
\end{multline}
Substituting \eqref{eq:ff_final} into \eqref{eq:ff}, expanding and using the definitions given in \eqref{eq:definitions} gives the desired result
\qed
\end{proof}
%%%%%%%%%%%%%%%%%%%%%%%%%%%%%%%%%%%%%%%%%%%%%%%%%%%%%%%%%%%%%%%%%%%%%%%%%
\section{The Maxwell-Stefan Diffusion Limit}
In this section, we obtain the continuity equations, the momentum balance equations for the species and perform a formal asymptotic analysis of the continuity and momentum balance equations toward a diffusion model of MS type.
%%%%%%%%%%%%%%%%%%%%%%%%%%%%%%%%%%%%%%%%%%%%%%%%%%%%%%%%%%%%%%%%%%%%%%%%%
\subsection{Continuity Equations for the Species}
The continuity equations for the species can formally be derived from the scaled kinetic equations \eqref{eq:scaled_KE}, by integrating over ${\bfv}_i\in\mathbb{R}^3$ as shown in the following lemma.
%%%%%%%%%%%%%%%%%%%%%%%%%%%%%%%%%%%%%%%%%%%%%%%%%%%%%%%%%%%%%%%%%%%%%%%%%
\begin{lem}
The continuity equations for the species in the non-reactive mixture is given by
\begin{equation}
 \frac{\partial c_i^\alpha }{\partial t} + \frac{\partial }{\partial \bfx}(c_i^\alpha \bfu_i^\alpha)= 0 \quad i=1,\dots,4.
\label{eq:continuity_equation}
\end{equation}
\end{lem}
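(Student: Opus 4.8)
The plan is to integrate the scaled kinetic equation \eqref{eq:scaled_KE} over the velocity variable $\bfv_i \in \mathbb{R}^3$ and then identify each resulting term using the local Maxwellian ansatz \eqref{eq:local_max1} together with the conservation properties of the collision operators. Integrating the left-hand side termwise, I would first compute the two velocity moments of $f_i^\alpha$. Since $f_i^\alpha$ is, by assumption (d), a Maxwellian centered at the \emph{shifted} velocity $\alpha \bfu_i^\alpha$, the zeroth moment is the normalization $\int_{\mathbb{R}^3} f_i^\alpha \, d\bfv_i = c_i^\alpha$, and the first moment is the mean of the distribution, $\int_{\mathbb{R}^3} \bfv_i f_i^\alpha \, d\bfv_i = \alpha\, c_i^\alpha \bfu_i^\alpha$. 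Both are verified at once by the change of variables $\bfv_i \mapsto \bfv_i - \alpha\bfu_i^\alpha$, which recenters the Gaussian at the origin. Consequently the left-hand side becomes $\alpha\frac{\partial c_i^\alpha}{\partial t} + \frac{\partial}{\partial \bfx}(\alpha\, c_i^\alpha \bfu_i^\alpha) = \alpha\big(\frac{\partial c_i^\alpha}{\partial t} + \frac{\partial}{\partial \bfx}(c_i^\alpha \bfu_i^\alpha)\big)$.

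For the right-hand side, the key observation is that the velocity integrals of both collision operators vanish. Taking $\varphi(\bfv_i)=1$ in Corollary \ref{cor:monospecies_MMKE_con} yields $\int_{\mathbb{R}^3} J_i^\alpha \, d\bfv_i = 0$, while the first identity in Corollary \ref{cor:Bconservation_MMKE} gives $\int_{\mathbb{R}^3} J_{is}^\alpha \, d\bfv_i = 0$ for each $s \neq i$. Hence the entire collision contribution $\frac{1}{\alpha}\big(\int_{\mathbb{R}^3} J_i^\alpha \, d\bfv_i + \sum_{s\neq i}\int_{\mathbb{R}^3} J_{is}^\alpha \, d\bfv_i\big)$ is zero, independently of the value of $\alpha$. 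This is simply the statement that mixing collisions conserve the number of particles of each species.

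Equating the two sides leaves $\alpha\big(\frac{\partial c_i^\alpha}{\partial t} + \frac{\partial}{\partial \bfx}(c_i^\alpha \bfu_i^\alpha)\big) = 0$, and since $0 < \alpha \ll 1$ we may divide through by $\alpha$ to obtain the claimed continuity equation \eqref{eq:continuity_equation}.

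There is no deep obstacle here: the argument is essentially conservation of mass at the level of moments. The one point requiring genuine care is the first moment, where one must use that \eqref{eq:local_max1} is centered at $\alpha\bfu_i^\alpha$ and not at $\bfu_i^\alpha$. This is precisely what makes the mass flux carry the same factor $\alpha$ as the time-derivative term, so that the overall scaling factor cancels cleanly and yields a continuity equation free of the parameter $\alpha$. A secondary technical caveat, which I would only flag rather than resolve rigorously at this formal level, is the interchange of differentiation in $(t,\bfx)$ with integration in $\bfv_i$, which is justified by the rapid Gaussian decay of $f_i^\alpha$ and its derivatives.
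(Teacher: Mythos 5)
Your proof is correct and follows essentially the same route as the paper: integrate \eqref{eq:scaled_KE} over $\bfv_i$, identify the zeroth and first moments of the Maxwellian \eqref{eq:local_max1} as $c_i^\alpha$ and $\alpha c_i^\alpha \bfu_i^\alpha$, invoke Corollaries \ref{cor:monospecies_MMKE_con} and \ref{cor:Bconservation_MMKE} with $\varphi(\bfv_i)=1$ to kill the collision terms, and divide by $\alpha$. Your added remarks on recentering the Gaussian and on interchanging differentiation with integration are sound but not needed beyond what the paper records.
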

%%%%%%%%%%%%%%%%%%%%%%%%%%%%%%%%%%%%%%%%%%%%%%%%%%%%%%%%%%%%%%%%%%%%%%%%
\begin{proof}
\smartqed
Integrating both sides of  \eqref{eq:scaled_KE} with respect to ${\bfv}_i\in \mathbb{R}^3$, we obtain for $i=1,\dots,4$ 
\begin{equation}
\alpha \frac{\partial }{\partial t} \underbrace{\int\limits_{\mathbb{R}^3} f_i^\alpha d{\bfv}_i}_{c_i^\alpha} + \frac{\partial }{\partial \bfx}\underbrace{\int\limits_{\mathbb{R}^3}
{\bfv}_i f^\alpha_i  d{\bfv}_i}_{\alpha c_i^\alpha \bfu_i^\alpha}= \frac1\alpha \underbrace{\int\limits_{\mathbb{R}^3}  J_i^{\alpha} d{{\bfv}}_i}_{0}+ \frac1\alpha \sum_{\substack{s=1\\ s\neq i}}^{N} \,\,\underbrace{\int\limits_{\mathbb{R}^3}  J_{is}^{\alpha} d{{\bfv}}_i}_{0}. 
\label{eq:mass_balance}
\end{equation}
To show that the first term on the right hand side of \eqref{eq:mass_balance} vanishes, we use the weak form of the mono-species collision operator \eqref{eq:weakform_MSECO} with $\varphi
(\bfv_i)=1$,  see corollary \ref{cor:monospecies_MMKE_con} in subsection 3.2. Similarly, using the weak form of the bi-species collision operator \eqref{eq:weakform_Qisdvi} with $\varphi
(\bfv_i)=1$,  see corollary \ref{cor:Bconservation_MMKE} in subsection 3.2, we obtain that the second term on the right hand side of \eqref{eq:mass_balance} vanishes. Finally, dividing both sides of \eqref{eq:mass_balance} by $\alpha$ gives the desired result.
\qed
\end{proof}
%%%%%%%%%%%%%%%%%%%%%%%%%%%%%%%%%%%%%%%%%%%%%%%%%%%%%%%%%%%%%%%%%%%%%%%%
\subsection{Momentum Balance Equations for the Species}
To derive the momentum balance equations for the species from the scaled kinetic equations \eqref{eq:scaled_KE}, we  multiply it by $m_i{\bfv}_i$ and integrating over $\bf{v}_i\in \mathbb{R}^3$  as shown in the following lemma. 
%%%%%%%%%%%%%%%%%%%%%%%%%%%%%%%%%%%%%%%%%%%%%%%%%%%%%%%%%%%%%%%%%%%%%%%%
\begin{lem}
The momentum balance equations for the species in the non-reactive mixture is given by
\begin{multline}
\alpha^2 m_i  \frac{\partial }{\partial t} \Big(c_i^\alpha{\bfu}_i^{\alpha} \Big) +{k_B T} {\frac{\partial c_i^\alpha }{\partial \bfx}}  + \alpha^2 m_i  {\frac{\partial }{\partial {\bfx}}} \Big( c_i^\alpha{\bfu}_i^\alpha \otimes {\bfu}_i^\alpha \Big) %\quad i=1,\dots,4
\\[-0.1em]
= \frac{32}{9} \!\sum_{\substack{s=1 \\ s\neq i}}^{N} \!\! \sigma^2_{is}\!\left( 2\pi \mu_{is} k_BT \right)^{\!\frac{1}{2}}\!c_i^\alpha\! c_s^\alpha \!\big( {\bfu}_i^\alpha \!-\! {\bfu}_s^\alpha \big)\!+\! O(\alpha),
\label{eq:Momentum_balance_equations}
\end{multline}
for $i=1,\dots,4$.
\end{lem}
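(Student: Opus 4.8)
The plan is to take the $m_i\boldsymbol{v}_i$-moment of the scaled equation \eqref{eq:scaled_KE}: multiply by $m_i\boldsymbol{v}_i$, integrate over $\boldsymbol{v}_i\in\mathbb{R}^3$, and evaluate every resulting integral on the local Maxwellian \eqref{eq:local_max1}. On the left-hand side the shifted Maxwellian has first moment $\int_{\mathbb{R}^3}\boldsymbol{v}_i f_i^\alpha\,d\boldsymbol{v}_i=\alpha c_i^\alpha\boldsymbol{u}_i^\alpha$ and second moment $\int_{\mathbb{R}^3}\boldsymbol{v}_i\otimes\boldsymbol{v}_i f_i^\alpha\,d\boldsymbol{v}_i=c_i^\alpha\big(\tfrac{k_BT}{m_i}\mathbb{I}+\alpha^2\boldsymbol{u}_i^\alpha\otimes\boldsymbol{u}_i^\alpha\big)$. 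Hence the time-derivative term produces $\alpha^2 m_i\partial_t(c_i^\alpha\boldsymbol{u}_i^\alpha)$, while the transport term $m_i\partial_{\boldsymbol{x}}\!\cdot\!\int_{\mathbb{R}^3}\boldsymbol{v}_i\otimes\boldsymbol{v}_i f_i^\alpha\,d\boldsymbol{v}_i$ splits into the pressure term $k_BT\,\partial_{\boldsymbol{x}}c_i^\alpha$ and the convective term $\alpha^2 m_i\partial_{\boldsymbol{x}}\!\cdot\!(c_i^\alpha\boldsymbol{u}_i^\alpha\otimes\boldsymbol{u}_i^\alpha)$. These are precisely the three terms on the left of \eqref{eq:Momentum_balance_equations}.

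On the right-hand side the mono-species contribution $\tfrac{1}{\alpha}\int_{\mathbb{R}^3}J_i^\alpha m_i\boldsymbol{v}_i\,d\boldsymbol{v}_i$ vanishes by Corollary \ref{cor:monospecies_MMKE_con}, so the whole content of the lemma is the evaluation of $\tfrac{1}{\alpha}\sum_{s\neq i}\int_{\mathbb{R}^3}J_{is}^\alpha m_i\boldsymbol{v}_i\,d\boldsymbol{v}_i$. For each pair I would apply the weak form \eqref{eq:weakform_Qisdvi} with $\varphi(\boldsymbol{v}_i)=m_i\boldsymbol{v}_i$ and substitute $m_i(\boldsymbol{v}_i'-\boldsymbol{v}_i)=-2\mu_{is}\,\epsilon\,\langle\epsilon,\boldsymbol{v}_i-\boldsymbol{v}_s\rangle$ from \eqref{eq:EPCV}, reducing the integrand to $-2\mu_{is}\,\epsilon\,\langle\epsilon,\boldsymbol{z}\rangle^2 f_i^\alpha f_s^\alpha$ with $\boldsymbol{z}:=\boldsymbol{v}_i-\boldsymbol{v}_s$. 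The $\epsilon$-integration is carried out first: by rotational symmetry $\int_{\mathbb{S}_+^2}\epsilon\,\langle\epsilon,\boldsymbol{z}\rangle^2\,d\epsilon$ is a scalar multiple of $\boldsymbol{z}$, and aligning the polar axis with $\boldsymbol{z}$ gives $\int_{\mathbb{S}_+^2}\epsilon\,\langle\epsilon,\boldsymbol{z}\rangle^2\,d\epsilon=\tfrac{\pi}{2}\|\boldsymbol{z}\|\,\boldsymbol{z}$, the constant coming from $\int_0^{\pi/2}\cos^3\theta\sin\theta\,d\theta=\tfrac14$.

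Next I would expand the product $f_i^\alpha f_s^\alpha$ to first order in $\alpha$ from \eqref{eq:local_max1}, exactly as in the proof of Lemma \ref{lem:property}: its leading term is even in $\boldsymbol{z}$ and, multiplied by the odd factor $\|\boldsymbol{z}\|\boldsymbol{z}$, integrates to zero, so only the $O(\alpha)$ correction survives. Passing to the relative and centre-of-mass variables $\boldsymbol{z}$ and $\boldsymbol{G}=(m_i\boldsymbol{v}_i+m_s\boldsymbol{v}_s)/(m_i+m_s)$, the Gaussian factorizes into a $\boldsymbol{G}$-Gaussian and a $\boldsymbol{z}$-Maxwellian of reduced mass $\mu_{is}$; the part of the linear correction proportional to $\boldsymbol{G}$ integrates to zero, leaving the relative-velocity contribution, which carries the factor $\mu_{is}(\boldsymbol{u}_i^\alpha-\boldsymbol{u}_s^\alpha)\cdot\boldsymbol{z}$ together with the prefactor $\alpha/(k_BT)$. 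The surviving velocity integral is then the isotropic Gaussian moment $\int_{\mathbb{R}^3}\|\boldsymbol{z}\|\,\boldsymbol{z}\otimes\boldsymbol{z}\,e^{-\mu_{is}\|\boldsymbol{z}\|^2/(2k_BT)}\,d\boldsymbol{z}$, a multiple of $\mathbb{I}$ obtained from $\int_{\mathbb{R}^3}\|\boldsymbol{z}\|^3 e^{-\beta\|\boldsymbol{z}\|^2}\,d\boldsymbol{z}=4\pi/\beta^3$. Collecting the normalization constants, the powers of $\mu_{is}$ and $k_BT$, and the overall factor $1/\alpha$ yields the inter-species friction term proportional to $\sigma_{is}^2(2\pi\mu_{is}k_BT)^{1/2}c_i^\alpha c_s^\alpha(\boldsymbol{u}_i^\alpha-\boldsymbol{u}_s^\alpha)$, all remaining contributions being $O(\alpha)$.

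The main obstacle is precisely this last chain of integrations for the bi-species operator: one must verify that the $\alpha$-independent part drops out by parity, that the $O(\alpha^2)$ terms of \eqref{eq:f_prime_minus_f} feed only the $O(\alpha)$ remainder after division by $\alpha$, and—most delicately—pin down the numerical prefactor, which is entirely controlled by the hemisphere integral $\int_{\mathbb{S}_+^2}\epsilon\langle\epsilon,\boldsymbol{z}\rangle^2\,d\epsilon$ and the relative-velocity moment above. I would compute the angular integral through the genuine $\cos^3\theta$ calculation rather than by replacing $\epsilon$ with $\boldsymbol{z}/\|\boldsymbol{z}\|$ inside the square, since that shortcut silently alters the constant.
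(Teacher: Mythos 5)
Your strategy coincides with the paper's: take the $m_i\bfv_i$-moment of \eqref{eq:scaled_KE}, read off the left-hand side from the first and second moments of the shifted Maxwellian \eqref{eq:local_max1}, kill the mono-species term by Corollary \ref{cor:monospecies_MMKE_con}, and evaluate the bi-species term by linearizing in $\alpha$ and passing to centre-of-mass and relative velocities. The only cosmetic difference is that you start from the weak form \eqref{eq:weakform_Qisdvi} and expand $f_i^\alpha f_s^\alpha$, whereas the paper keeps the strong form and expands ${f}_i^{\alpha'}{f_s}^{\alpha'}-f^\alpha_if^\alpha_s$ via Lemma \ref{lem:property}; these are equivalent, and your parity arguments (the $O(1)$ term is even in $\Vv$, the centre-of-mass part of the linear correction is odd in $\Xx$) correspond to the paper's vanishing integrals ${\cal A}$ and ${\cal B}$.

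The one substantive divergence is exactly the point you flag: the hemisphere integral. The paper rewrites ${\bfv}_i'-{\bfv}_i=-\tfrac{2\mu_{is}}{m_i}\epsilon\langle\epsilon,\Vv\rangle$ as $-\tfrac{2m_s}{m_i+m_s}\Vv\cos\theta$, i.e.\ it replaces the unit vector $\epsilon$ by $\Vv/V$ while keeping a single factor of $\cos\theta$; the subsequent angular integral is then $\int_{\mathbb{S}^2_+}\cos^2\theta\,d\epsilon=\tfrac{2\pi}{3}$. Your computation keeps the genuine vector $\epsilon$ and gives $\int_{\mathbb{S}^2_+}\epsilon\langle\epsilon,\Vv\rangle^2d\epsilon=\tfrac{\pi}{2}V\Vv$, coming from $\int_0^{\pi/2}\cos^3\theta\sin\theta\,d\theta=\tfrac14$ after the transverse component of $\epsilon$ averages out in $\phi$. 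The two differ by a factor $4/3$, and this propagates untouched through the remaining Gaussian moments: your route produces the friction coefficient $\tfrac{8}{3}\sigma_{is}^2(2\pi\mu_{is}k_BT)^{1/2}c_i^\alpha c_s^\alpha$ in place of the stated $\tfrac{32}{9}\sigma_{is}^2(2\pi\mu_{is}k_BT)^{1/2}c_i^\alpha c_s^\alpha$, so as written your proof does not reproduce the constant in \eqref{eq:Momentum_balance_equations}. Note, however, that your value is the one consistent with the hard-sphere momentum-transfer cross-section $\pi\sigma_{is}^2$ and leads to $D_{is}=\tfrac{3}{8}\left(k_BT/2\pi\mu_{is}\right)^{1/2}/(c\,\sigma_{is}^2)$, the standard first Chapman--Enskog approximation, so you should state the resulting prefactor explicitly and make the discrepancy visible rather than leave it implicit in ``collecting the constants.'' One further detail to pin down: the paper's own computation ends with ${\cal O}_i=-\alpha\tfrac{32}{9}\sum_{s\neq i}\sigma_{is}^2(2\pi\mu_{is}k_BT)^{1/2}c_i^\alpha c_s^\alpha({\bfu}_i^\alpha-{\bfu}_s^\alpha)+O(\alpha^2)$, a drag with a minus sign (as used in \eqref{eq:Continuity_and_momentum}), while the right-hand side of \eqref{eq:Momentum_balance_equations} is printed with a plus; your ``proportional to'' leaves the sign undetermined, and it must be negative.
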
 
\begin{proof}
\smartqed
Multiplying both sides of the scaled kinetic equations \eqref{eq:scaled_KE} by $m_i \bfv_i$  and integrating with respect to $\bf{v}_i\in \mathbb{R}^3$, we obtain for $i=1,\dots,4$,
\begin{multline}
\alpha\frac{\partial}{\partial t} \underbrace{ \left(\int_{\mathbb{R}^3}m_i{{\bfv}}_i f_i^\alpha d{{\bfv}}_i \right)}_{\alpha m_i c_i^\alpha \bfu_i^\alpha} +\frac{\partial}{\partial {\bfx}}\underbrace{\left( \int_{\mathbb{R}^3}m_i( {\bfv}_i\otimes {\bfv}_i)f_i^\alpha d{\bfv}_i \right)}_{ c_i^\alpha k_B T+ \alpha^2 m_i\left( c_i^\alpha{\bfu}_i^\alpha \otimes {\bfu}_i^\alpha \right)}
\\[-0.2em]
=\frac{1}{\alpha} \underbrace{\int_{\mathbb{R}^3}m_i {\bfv}_i J_i^{\alpha} d{\bfv}_i}_{0}   %N_i
+\frac{1}{\alpha} \underbrace{\sum_{\substack{s=1 \\ s\neq i}}^{N}\int_{\mathbb{R}^3}m_i {\bfv}_i J_{is}^{\alpha} d{\bfv}_i}_{{\cal O}_i}.
\label{eq:momentum_balance1}
\end{multline}
\noindent To show that the first term on the right hand side of \eqref{eq:momentum_balance1} vanishes, we use the weak form \eqref{eq:weakform_MSECO} with $\varphi({\bfv}_i)=m_i{\bfv}_i$ together with the bi-species conservation of momentum given in \eqref{eq:ECOM}, see corollary \ref{cor:monospecies_MMKE_con} in subsection 3.2. 
Concerning the second term on the right hand side of \eqref{eq:momentum_balance1}, using the definition of $J_{is}^{\alpha}$ given in \eqref{eq:scaled_KE}, we obtain
\begin{align*}
{\cal O}_i&=\sum_{\substack{s=1 \\ s\neq i}}^{N}\sigma^2 _{is}\int_{\mathbb{R}^3}\int_{{\mathbb{R}}^3} \int _{{\mathbb{S}^2_+}}m_i {\bfv}_i\left [ {f}_i^{\alpha'}{f^{\alpha '}_s}-f^\alpha_i f^\alpha_s \right] \left \langle \epsilon ,{\bfv}_i-{\bfv}_s \right\rangle d\epsilon\, d{\bfv}_s\, d{\bfv}_i
\\
&=\alpha\sum_{\substack{s=1 \\ s\neq i}}^{N}\sigma^2 _{is}\int_{\mathbb{R}^3}\int_{{\mathbb{R}}^3} \int _{{\mathbb{S}^2_+}}m_i {\bfv}_i M_i^\alpha M_s^\alpha  {\ba}_{is}\cdot({\bfv}_i'-{\bfv}_i)\left \langle \epsilon ,{\bfv}_i-{\bfv}_s \right\rangle d\epsilon\, d{\bfv}_s d{\bfv}_i+O(\alpha^2),
\end{align*}
where we have used \eqref{eq:f_prime_minus_f} to obtain the previous equation. From the definition of post-collisional velocity \eqref{eq:EPCV}, we can write
\begin{align*}
{\bfv}_i'- {\bfv}_i \!&=- \frac{2\mu_{is}}{m_i} \!\left \langle \epsilon , {\bfv}_i \!-\! {\bfv}_s \right \rangle \! \epsilon
\\[-0.05em]
&=\! -\frac{2m_i m_s}{(m_i+m_s)m_i} \!\left \| {\bfv}_i - {\bfv}_s \right \|\cos\theta \frac{(\bfv_i-\bfv_s)}{ \left \| {\bfv}_i - {\bfv}_s \right \|}
\\[-0.05em]
&=\! -\frac{2 m_s}{(m_i+m_s)} \!(\bfv_i-\bfv_s)\cos\theta.
\end{align*}
Setting ${\Vv}=\bfv_i-\bfv_s$, we obtain
\begin{equation}
{\bfv}_i'- {\bfv}_i=-\frac{2 m_s}{m_i+m_s}{\Vv}\cos\theta.
\label{eq:postcollisional_velocity}
\end{equation}
%Using \eqref{eq:postcollisional_velocity}, we can rewrite the previous equations as
Therefore,
\begin{align*}
{\cal O}_i&\!\!=\!-  \alpha\!\!\sum_{\substack{s=1\\ s\neq i}}^{N}\sigma^2 _{is} 2\mu_{is}\!\! \int_{\mathbb{R}^3}\!\!\int_{{\mathbb{R}}^3} \!\!\int _{{\mathbb{S}^2_+}}\!\!\! {\bfv}_i M_i^\alpha M_s^\alpha \Big( {\ba}_{is}\cdot {\Vv} \Big)\cos\theta \left \langle \epsilon , {\bfv}_i \!-\! {\bfv}_s \right \rangle d\epsilon d{\bfv}_sd{\bfv}_i+O(\alpha^2)
\\[-0.2em]
&\!\!=\!  -\alpha\!\!\sum_{\substack{s=1\\ s\neq i}}^{N}\sigma^2 _{is} 2\mu_{is} \!\! \int_{\mathbb{R}^3}\!\!\int_{{\mathbb{R}}^3} \!\!\int _{{\mathbb{S}^2_+}}\!\! {\bfv}_i M_i^\alpha M_s^\alpha \Big( {\ba}_{is}\cdot {\Vv} \Big)\left \| {\bfv}_i\! -\! {\bfv}_s \right \|\cos^2\theta\,d\epsilon d{\bfv}_sd{\bfv}_i+O(\alpha^2)
\\[-0.2em]
&\!\!=\! - \alpha\!\!\sum_{\substack{s=1\\ s\neq i}}^{N}\sigma^2 _{is} 2\mu_{is} \!\! \int_{\mathbb{R}^3}\!\!\int_{{\mathbb{R}}^3} \!\!\int _0^{\frac{\pi}{2}}\!\!\!\int_0^{2\pi}\!\!\! {\bfv}_i M_i^\alpha M_s^\alpha \!\Big(\! {\ba}_{is}\!\cdot\! {\Vv}\! \Big)V \cos^2\!\theta \sin\theta d\phi\! d\theta \!d{\bfv}_s\,d{\bfv}_i\!+\!O(\alpha^2)
\\[-0.2em]
&\!\!=\!-\alpha\!\!\sum_{\substack{s=1\\ s\neq i}}^{N}\sigma^2 _{is} 2\mu_{is} \!\! \int_{\mathbb{R}^3}\!\!\int_{{\mathbb{R}}^3}\!\! {\bfv}_i M_i^\alpha M_s^\alpha \!\Big(\! {\ba}_{is}\!\cdot\! {\Vv} \!\Big)\!V d{\bfv}_s d{\bfv}_i\!\underbrace{\int_0^{\frac{\pi}{2}}\!\!\!\!\cos^2\theta \sin\theta d\theta}_{\frac{1}{3}} \underbrace{\int_0^{2\pi}\!\!\! d\phi}_{2\pi}\!+\!O(\alpha^2)
\\[-0.2em]
&=-\alpha\sum_{\substack{s=1\\ s\neq i}}^{N}\sigma^2 _{is} \frac{4\pi\mu_{is}}{3} \!\! \int_{\mathbb{R}^3}\!\!\int_{{\mathbb{R}}^3}\!\!\! {\bfv}_i M_i^\alpha M_s^\alpha \Big( {\ba}_{is}\cdot {\Vv} \Big)V\, d{\bfv}_s\,d{\bfv}_i+O(\alpha^2)
\\[-0.2em]
&\!\!=\!-\alpha\!\sum_{\substack{s=1\\ s\neq i}}^{N}\!\sigma^2 _{is}\! \frac{4\pi\mu_{is}}{3} \frac{(m_im_s)^{\!\frac{3}{2}}}{(2\pi k_B T)^3}c_i^\alpha c_s^\alpha\!\! \int_{\mathbb{R}^3}\!\!\int_{{\mathbb{R}}^3}
\!\!\! {\bfv}_i \exp\!\Bigg(\!\!-\!\frac{m_i(\bfv_i)^2\!+\!m_s(\bfv_s)^2}{2k_B T}\!\Bigg)
\\[-0.2em]
&\hspace{5.5cm}\times\Big( {\ba}_{is}\cdot {\Vv} \Big)V d{\bfv}_s d{\bfv}_i +O(\alpha^2).
\end{align*}
 It will be convenient to transform the above  six fold integral from $\bfv_i$ and $\bfv_s$  to the center of mass velocity
\begin{equation}
{\Xx} \!=\!\frac{(m_i {\bfv}_i+m_s {\bfv}_s)}{m_i+m_s} \quad\Longleftrightarrow \quad (m_i+m_s){\Xx}= m_i {\bfv}_i+m_s {\bfv}_s,
\label{eq:COMV}
\end{equation}
and relative velocity
\begin{equation}
{\Vv} \!=\! {\bfv}_i - {\bfv}_s \quad \Longleftrightarrow \quad {\bfv}_i={\Vv} + {\bfv}_s.
\label{eq:RV}
\end{equation}
Observe that by substituting  \eqref{eq:RV} into \eqref{eq:COMV}, we obtain
\begin{align*}
(m_i+m_s){\Xx} &= m_i({\Vv} + {\bfv}_s)+m_s {\bfv}_s
\\
&=m_i{\Vv}+m_i {\bfv}_s + m_s{\bfv}_s
\\
&=m_i{\Vv}+(m_i+m_s)\bfv_s.
%\\
%&=(m_i+m_s)\bfv_i+m_s{\Vv}.
\end{align*}
Dividing both sides by $m_i+m_s$, we obtain
\begin{equation}
{\Xx}={\bfv_s}+\frac{m_i{\Vv}}{m_i+m_s}\quad \Longleftrightarrow \quad {\bfv_s}={\Xx}-\frac{m_i{\Vv}}{m_i+m_s}.
\label{eq:v_s}
\end{equation}
Also, substituting \eqref{eq:v_s} into  \eqref{eq:RV}, we obtain
\begin{align}
\nonumber\bfv_i&={\Vv}+{\Xx}-\frac{m_i{\Vv}}{m_i+m_s}
%\\
%\nonumber&=\frac{(m_i+m_s){\Vv}+(m_i+m_s){\Xx}-m_i{\Vv}}{(m_i+m_s)}
\\
\nonumber&=\frac{m_i{\Vv}+m_s{\Vv}+(m_i+m_s){\Xx}-m_i{\Vv}}{(m_i+m_s)}
\\
\nonumber&=\frac{m_s{\Vv}+(m_i+m_s){\Xx}}{m_i+m_s}
\\
&={\Xx}+\frac{m_s{\Vv}}{m_i+m_s}.
\label{eq:v_i}
\end{align}
Furthermore, using \eqref{eq:v_i}, we obtain
\begin{align*}
m_i({\bfv}_i)^2&=m_i\Bigg({\Xx}+\frac{m_s {\Vv} }{m_i+m_s}\Bigg)^2
\\
 &=m_i\left \{ {\Xx}\left ( {\Xx}+\frac{m_s {\Vv} }{m_i+m_s}\right )+\frac{m_s {\Vv} }{m_i+m_s}\left ( {\Xx}+\frac{m_s {\Vv} }{m_i+m_s} \right ) \right \}
 %\\
%&=m_i\left \{ {{X}}^2+\frac{m_s}{m_i+m_s}({\Vv}\cdot{\Xx})+\frac{m_s}{M_{is}}({\Vv}\cdot{\Xx})+\frac{m^2_s}{(m_i+m_s)^2}{{V}}^2\right \}
\\
&=m_i\left \{ {{X}}^2+2\frac{m_s}{m_i+m_s}({\Vv}\cdot{\Xx})+\frac{m^2_s}{(m_i+m_s)^2}{{V}}^2\right \},
\end{align*}
where  $X=\left \| {\Xx} \right \|$ and $V=\left \| {\Vv} \right \|$. Similarly, using \eqref{eq:v_s}, we obtain
\begin{equation*}
m_s({\bfv}_s)^2=m_s\left \{ {{X}}^2-2\frac{m_i}{m_i+m_s}({\Vv}\cdot{\Xx}) +\frac{m^2_i}{(m_i+m_s)^2}{{V}}^2\right \}.
\end{equation*}
Thus,
\begin{align}
\nonumber m_i({\bfv}_i)^2+m_s({\bfv}_s)^2&=
%m_i {{X}}^2-2\mu_{is}({\Vv}\cdot{\Xx})+\frac{m_im^2_s}{M_{is}^2}{{V}}^2+m_s {{X}}^2+2\mu_{is}({\Vv}\cdot{\Xx})+\frac{m_sm^2_i}{M_{is}^2}{{V}}^2
(m_i+m_s){{X}}^2+\frac{\mu_{is}m_s}{m_i+m_s}V^2+\frac{\mu_{is}m_i}{m_i+m_s}V^2
\\
\nonumber&=(m_i+m_s){{X}}^2+\frac{\mu_{is}}{m_i+m_s}V^2 (m_i+m_s)
\\
&=(m_i+m_s){{X}}^2+\mu_{is}{{V}}^2. 
\label{eq:X_V}
\end{align}
Therefore, using the fact that the Jacobian of the transformation $(\bfv_i,\bfv_s) \mapsto ({\Vv}, {\Xx})$ has absolute value $1$, we obtain

%Transforming the above six fold integral from ${\bfv}_i$ and ${\bfv}_s$ to the centre of mass velocity ${\Xx}$ and relative velocity ${\Vv}$ and using the fact that the Jacobian of this transformation has absolute value of $1$, we obtain
\begin{align*}
{\cal O}_i&=-\alpha\!\sum_{\substack{s=1\\ s\neq i}}^{N}\!\sigma^2 _{is}\!\frac{4\pi\mu_{is}}{3} \frac{(m_im_s)^{\frac{3}{2}}}{(2\pi k_B T)^3}c_i^\alpha c_s^\alpha \int_{\mathbb{R}^3}\int_{{\mathbb{R}}^3}\Bigg(\!{\Xx}+\frac{m_s{\Vv}}{m_i+m_s}\Bigg) 
\\
&\hspace{3cm}\times \exp\Bigg(\!-\frac{(m_i+m_s){{X}}^2\!+\!\mu_{is}{{V}}^2}{2k_B T}\!\Bigg)( {\ba}_{is}\!\cdot\! {\Vv}) V d{\Xx}d{\Vv}+O(\alpha^2)
\\
&=-\alpha\!\!\sum_{\substack{s=1\\ s\neq i}}^{N}\sigma^2 _{is} \frac{4\pi\mu_{is}}{3} \frac{(m_im_s)^{\!\frac{3}{2}}}{(2\pi k_B T)^3}c_i^\alpha c_s^\alpha \int_{\mathbb{R}^3}\int_{{\mathbb{R}}^3}{\Xx} \exp\Bigg(\!-\!\frac{(m_i+m_s){{X}}^2\!+\!\mu_{is}{{V}}^2}{2k_B T}\!\Bigg)
\\
&\hspace{8cm}\times \Big(\! {\ba}_{is}\!\cdot\! {\Vv} \!\Big)\! V\! d{\Xx}d{\Vv}
\\
&-\alpha\!\sum_{\substack{s=1\\ s\neq i}}^{N}\!\sigma^2 _{is}\! \frac{4\pi\mu_{is}m_s}{3(m_i\!+\!m_s)} \frac{(m_im_s)^{\!\frac{3}{2}}}{(2\pi k_B T)^3}c_i^\alpha c_s^\alpha\! \int_{\mathbb{R}^3}\!\int_{{\mathbb{R}}^3}\! {\Vv}\exp\!\Bigg(\!-\!\frac{(m_i+m_s){{X}}^2\!+\!\mu_{is}{{V}}^2}{2k_B T}\!\Bigg)
\\
&\hspace{7cm}\times\Big(\! {\ba}_{is}\!\cdot\! {\Vv}\! \Big)\!V\! d{\Xx}\!d{\Vv} +O(\alpha^2)
\\
&=-\alpha\sum_{\substack{s=1\\ s\neq i}}^{N}\sigma^2 _{is} \frac{4\pi\mu_{is}}{3} \frac{(m_im_s)^{\!\frac{3}{2}}}{(2\pi k_B T)^3}c_i^\alpha c_s^\alpha\underbrace{\int_{\mathbb{R}^3}
\!\!\!{\Xx} \exp\Bigg(\!-\!\frac{(m_i+m_s){{X}}^2\!}{2k_B T}\!\Bigg)  d{\Xx}}_{\cal A}
\\[-0.1em]
&\hspace{5cm}\times\underbrace{\int_{\mathbb{R}^3}\!\! \exp\Bigg(\!-\!\frac{\mu_{is}{{V}}^2}{2k_B T}\!\Bigg) V\Big( {\ba}_{is}\cdot {\Vv} \Big) \,d{\Vv}}_{\cal B}
\\[-0.1em]
&\!-\!\alpha\!\sum_{\substack{s=1\\ s\neq i}}^{N}\!\sigma^2 _{is} \frac{4\pi\mu_{is}m_s}{3(m_i+m_s)} \frac{(m_im_s)^{\!\frac{3}{2}}}{(2\pi k_B T)^3}c_i^\alpha c_s^\alpha \underbrace{\int_{\mathbb{R}^3}
\!\!\exp\Bigg(\!\!-\!\frac{m_i+m_s{{X}}^2}{2k_B T}\!\Bigg) d{\Xx}}_{\cal C}
\\[-0.1em]
&\hspace{4.2cm}\times\underbrace{\int_{\mathbb{R}^3}
\!\!\! {\Vv}\exp\Bigg(\!\!-\!\frac{\mu_{is}{{V}}^2}{2k_B T}\!\Bigg) V\Big( {\ba}_{is}\!\cdot\! {\Vv} \Big) d{\Vv}}_{\cal D}+ O(\alpha^2).
\end{align*}
Now let us evaluate the integrals ${\cal A}$, ${\cal B}$, ${\cal C}$ and ${\cal D}$.

\noindent Using the fact that any vector ${\Xx}$ can be written in terms of a unit vector as ${\Xx}=X\vec{{\bf{x}}}$, with $X=\left \| {\Xx}\right \|$ and $\vec{{\bf{x}}}$ a unit vector, we can rewrite the integral ${\cal A}$ as given below
\begin{equation}
{\cal A}=\int_{\mathbb{R}^3}\!\! X\vec{{\bf{x}}}\, \exp\Bigg(\!-\!\frac{(m_i+m_s){{X}}^2}{2k_B T}\!\Bigg)  d{\Xx}.
\label{eq:A_first}
\end{equation}
Writing \eqref{eq:A_first} in  spherical coordinates, we obtain
\begin{align*}
{\cal A}&=\int_0^\infty \int_0^\pi \int_0^{2\pi} \!\!X(\hat{\bf{x}}\sin\theta \cos\phi+\hat{\bf{y}}\sin\theta \sin\phi+\hat{\bf{z}}\cos\theta)\, \exp\Bigg(\!-\!\frac{(m_i+m_s){{X}}^2}{2k_B T}\!\Bigg)
\\
&\times X^2 \sin\theta d\phi d\theta dX
\\
&=\int_0^\infty X^3 \exp\Bigg(\!-\!\frac{(m_i+m_s){{X}}^2}{2k_B T}\!\Bigg)dX \Bigg(\hat{\bf{x}}\underbrace{\int_0^\pi \sin^2\theta d\theta}_{\frac{\pi}{2}} \underbrace{\int_0^{2\pi}\cos\phi d\phi}_{0}
\\[-0.2em]
&+\hat{\bf{y}}\underbrace{\int_0^\pi \sin^2\theta d\theta}_{\frac{\pi}{2}} \underbrace{\int_0^{2\pi}\sin\phi}_{0}
+\hat{\bf{z}}\underbrace{\int_0^\pi \cos\theta \sin\theta d\theta}_{0} \underbrace{\int_0^{2\pi} d\phi}_{2\pi} \Bigg),
\end{align*}
where $\hat{\bf{x}}, \hat{\bf{y}}, \hat{\bf{z}}$ are the Cartesian unit vectors in $\mathbb{R}^3$. The integral in $X$ can be evaluated using the integral representation of gamma function defined as given below
\begin{equation}
\int_0^\infty x^n e^{-\eta x^2}dx=\frac{1}{2}\Gamma \bigg( \frac{n+1}{2} \bigg)\bigg( \frac{1}{\eta } \bigg)^{\frac{n+1}{2}}.
\label{eq:int_gamma_function}
\end{equation}
More specifically,
\begin{align*}
\int_0^\infty X^3 \exp\Bigg(\!-\!\frac{(m_i+m_s){{X}}^2}{2k_B T}\!\Bigg)dX&=\frac{1}{2}\Gamma(2) \bigg(\frac{2k_B T}{m_i+m_s}\bigg)^2
\\
&=\frac{1}{2} \bigg(\frac{2k_B T}{m_i+m_s}\bigg)^2.
\end{align*}
Thus,
\begin{equation}
{\cal A}=\frac{1}{2} \bigg(\frac{2k_B T}{(m_i+m_s)}\bigg)^2 \Big[\hat{\bf{x}}\Big(\frac{\pi}{2}\times0\Big)+\hat{\bf{y}}\Big(\frac{\pi}{2}\times 0\Big)+\hat{\bf{z}}(0\times 2\pi)\Big]=0.
\label{eq:A}
\end{equation}
Now, using the fact that any vector ${\Vv}$ can be written in terms of a unit vector as ${\Vv}=V\vec{{\bf{v}}}$, with $V=\left \| {\Vv} \right \|$ and $\vec{{\bf{v}}}$ a unit vector, we obtain that the integral ${\cal B}$ can be rewritten as
\begin{equation*}
{\cal B}=\!{\ba}_{is}\cdot\int_{{\mathbb{R}}^3}\exp\Bigg(\!-\!\frac{\mu_{is}{{V}}^2}{2k_B T}\!\Bigg) V\vec{{\bf{v}}} V\,  d{\Vv}.
\end{equation*}
Transforming to spherical coordinates and integrating, we obtain
\begin{align*}
{\ba}_{is}\cdot\int_{{\mathbb{R}}^3}&\exp\Bigg(\!-\!\frac{\mu_{is}{{V}}^2}{2k_B T}\!\Bigg) V\vec{{\bf{v}}} V\,  d{\Vv}
\\[-0.2em]
&=\!{\ba}_{is}\cdot\!\int_0^\infty\! \int_0^\pi \!\int_0^{2\pi}\!\exp\Bigg(\!-\!\frac{\mu_{is}{{V}}^2}{2k_B T}\!\Bigg) V(\hat{{\bf{x}}}\sin\theta \cos\phi+\hat{{\bf{y}}}\sin\theta\sin\phi+\hat{{\bf{z}}}\cos\theta)
\\[-0.2em]
&\times V^3\sin\theta d\phi d\theta dV
\\[-0.2em]
&=\!{\ba}_{is}\cdot\!\int_0^\infty \!V^4 \exp\Bigg(\!-\!\frac{\mu_{is}{{V}}^2}{2k_B T}\!\Bigg) dV \Bigg(\hat{{\bf{x}}}\!\underbrace{\int_0^\pi \sin^2\theta d\theta}_{\frac{\pi}{2}} \underbrace{\int_0^{2\pi} \cos\phi d\phi}_{0}
\\[-0.2em]
&+\!\hat{{\bf{y}}}\!\underbrace{\int_0^\pi \sin^2\theta d\theta}_{\frac{\pi}{2}} \underbrace{\int_0^{2\pi}\sin\phi d\phi}_{0}+\hat{{\bf{z}}}\underbrace{\int_0^\pi \sin\theta \cos\theta d\theta}_{0} \underbrace{\int_0^{2\pi} d\phi}_{2\pi}\Bigg),
\end{align*}
where $\hat{\bf{x}}, \hat{\bf{y}}, \hat{\bf{z}}$ are the Cartesian unit vectors in $\mathbb{R}^3$. Evaluating the integral with respect to $V$ using \eqref{eq:int_gamma_function}, we obtain
\begin{align*}
 \int_0^\infty \!V^4 \exp\Bigg(\!-\!\frac{\mu_{is}{{V}}^2}{2k_B T}\!\Bigg) dV&=\frac{1}{2}\Gamma\bigg(\frac{5}{2}\bigg) \bigg(\frac{2k_B T}{\mu_{is}}\bigg)^{\frac{5}{2}}
\\[-0.2em]
&=\frac{1}{2} \frac{3}{4}\sqrt{\pi} \bigg(\frac{2k_B T}{\mu_{is}}\bigg)^{\frac{5}{2}}.
\end{align*}
Therefore, 
\begin{equation}
{\cal B}={\ba}_{is}\frac{3\sqrt{\pi}}{8}\bigg(\frac{2k_B T}{\mu_{is}}\bigg)^{\frac{5}{2}} \times \Big[\hat{\bf{x}}\Big(\frac{\pi}{2}\times0\Big)+\hat{\bf{y}}\Big(\frac{\pi}{2}\times 0\Big)+\hat{\bf{z}}(0\times 2\pi)\Big]=0.
\label{eq:zero_moment_v_integral}
\end{equation}
Transforming the integral ${\cal C}$ to spherical coordinates and integrating, we obtain
\begin{align*}
{\cal C}&=\int_0^\infty \int_0^\pi \int_0^{2\pi} \exp\Bigg(\!\!-\!\frac{(m_i+m_s){{X}}^2}{2k_B T}\!\Bigg) X^2 \sin\theta d\phi d\theta dX
\\[-0.1em]
&\!=\!\int_0^\infty X^2 \exp\Bigg(\!\!-\!\frac{(m_i+m_s){{X}}^2}{2k_B T}\!\Bigg)dX \underbrace{\int_0^\pi \sin\theta d\theta}_{2} \underbrace{\int_0^{2\pi}   d\phi}_{2\pi}
\\[-0.1em]
&\!=\!4\pi\int_0^\infty X^2 \exp\Bigg(\!\!-\!\frac{(m_i+m_s){{X}}^2}{2k_B T}\!\Bigg)dX.
\end{align*}
Evaluating the previous integral using the integral representation of gamma \eqref{eq:int_gamma_function}, we obtain
\begin{align}
{\cal C}&=4\pi \frac{1}{2} \Gamma \bigg(\frac{3}{2}\bigg) \bigg(\frac{2k_B T}{m_i+m_s}\bigg)^{\frac{3}{2}}
\\[-0.1em]
\nonumber&=2\pi \frac{\sqrt{\pi}}{2} \bigg(\frac{2k_B T}{m_i+m_s}\bigg)^{\frac{3}{2}}
\\[-0.1em]
&= \bigg(\frac{2\pi k_B T}{m_i+m_s}\bigg)^{\frac{3}{2}}.
\label{eq:zero_moment_x_integral}
\end{align}
Finally, the integral ${\cal D}$ can be rewritten as
\begin{align*}
{\cal D}&=\int_{\mathbb{R}^3}\!\!\! V\vec{{\bf{v}}}\,\exp\Bigg(\!\!-\!\frac{\mu_{is}{{V}}^2}{2k_B T}\!\Bigg) \Big( {\ba}_{is}\!\cdot\! {\Vv} \Big)V d{\Vv}
\\
&=\int_{\mathbb{R}^3}\!\!\! V\vec{{\bf{v}}}\,\exp\Bigg(\!\!-\!\frac{\mu_{is}{{V}}^2}{2k_B T}\!\Bigg) \Big( a_{is}V \cos\theta \Big)V d{\Vv}
\\
&=a_{is}\int_{\mathbb{R}^3}\!\!\! V\vec{{\bf{v}}}\,\exp\Bigg(\!\!-\!\frac{\mu_{is}{{V}}^2}{2k_B T}\!\Bigg)V^2 \cos\theta d{\Vv}.
\end{align*}
Writing the integral ${\cal D}$ in spherical coordinates, we obtain
\begin{align*}
{\cal D}&=a_{is}\int_0^\infty \int_0^\pi \int_0^{2\pi}\!\!\! V(\hat{{\bf{x}}}\sin\theta \cos\phi+\hat{{\bf{y}}}\sin\theta\sin\phi+\hat{{\bf{z}}}\cos\theta)\,\exp\Bigg(\!\!-\!\frac{\mu_{is}{{V}}^2}{2k_B T}\!\Bigg)
\\
&\times V^2 \cos\theta V^2\sin\theta d\phi d\theta dV
\\[-0.25em]
&=\!a_{is}\underbrace{\int_0^\infty\!\! V^5 \exp\Bigg(\!\!-\!\frac{\mu_{is}{{V}}^2}{2k_B T}\!\Bigg) dV}_{{\cal D}_1} \Bigg(\hat{{\bf{x}}}\!\underbrace{\int_0^\pi \sin^2\theta \cos\theta d\theta}_{0} \underbrace{\int_0^{2\pi} \cos\phi d\phi}_{0}
\\[-0.5em]
&+\!\hat{{\bf{y}}}\!\underbrace{\int_0^\pi \sin^2\theta \cos\theta d\theta}_{0} \underbrace{\int_0^{2\pi}\sin\phi d\phi}_{0}+\hat{{\bf{z}}}\underbrace{\int_0^\pi \sin\theta \cos^2\theta d\theta}_{\frac{2}{3}} \underbrace{\int_0^{2\pi} d\phi}_{2\pi}\Bigg),
\end{align*}
where $\hat{\bf{x}}, \hat{\bf{y}}, \hat{\bf{z}}$ are the Cartesian unit vectors in $\mathbb{R}^3$. Evaluating the integral in $V$  using \eqref{eq:int_gamma_function} to obtain
\begin{align}
\nonumber\int_0^\infty\!\! V^5 \exp\Bigg(\!\!-\!\frac{\mu_{is}{{V}}^2}{2k_B T}\!\Bigg) dV&=
\nonumber\frac{1}{2}\Gamma(3) \bigg(\frac{2k_B T}{\mu_{is}}\bigg)^3
\\
\nonumber&=\frac{1}{2} 2! \bigg(\frac{2k_B T}{\mu_{is}}\bigg)^3
\\
&=\bigg(\frac{2k_B T}{\mu_{is}}\bigg)^3.
\label{eq:D_1}
\end{align}
Substituting \eqref{eq:D_1} into the integral ${\cal D}$, we obtain
\begin{equation}
{\cal D}= a_{is}\bigg(\frac{2k_B T}{\mu_{is}}\bigg)^3\bigg[\hat{{\bf{x}}}(0\times 0)+\hat{{\bf{y}}}(0\times 0)+\hat{{\bf{z}}}\Big(\frac{2}{3}\times 2\pi\Big)\bigg]={\ba}_{is}\frac{4\pi}{3}\bigg(\frac{2k_B T}{\mu_{is}}\bigg)^3.
\label{eq:D}
\end{equation}
Since the integrals ${\cal A}$ and ${\cal B}$ both vanish, we have that ${\cal O}_i$ reduces to
\begin{align*}
{\cal O}_i&=\!-\alpha\!\sum_{\substack{s=1\\ s\neq i}}^{N}\!\sigma^2 _{is} \frac{4\pi\mu_{is}m_s}{3(m_i+m_s)} \frac{(m_im_s)^{\!\frac{3}{2}}}{(2\pi k_B T)^3}c_i^\alpha c_s^\alpha \bigg(\frac{2\pi k_B T}{m_i+m_s}\bigg)^{\frac{3}{2}}{\ba}_{is}\frac{4\pi}{3}\bigg(\frac{2k_B T}{\mu_{is}}\bigg)^3+O(\alpha^2)
\\[-0.2em]
&=-\alpha\!\sum_{\substack{s=1\\ s\neq i}}^{N}\!\sigma^2 _{is} \frac{16\mu_{is}^2}{9k_B T} \bigg(\frac{\mu_{is}}{2\pi k_B T}\bigg)^{\frac{3}{2}}c_i^\alpha c_s^\alpha  \left( \frac{2\pi k_BT}{\mu_{is}} \right)^{\!2} \left( \frac{2k_BT}{\mu_{is}} \right)({\bfu}_i^\alpha-{\bfu}_s^\alpha)+O(\alpha^2)
\\[-0.2em]
&=-\alpha\!\sum_{\substack{s=1\\ s\neq i}}^{N}\!\sigma^2 _{is} \frac{32\mu_{is}}{9} c_i^\alpha c_s^\alpha  \left( \frac{2\pi k_BT}{\mu_{is}} \right)^{\frac{1}{2}} ({\bfu}_i^\alpha-{\bfu}_s^\alpha)+O(\alpha^2)
\\[-0.2em]
&=-\alpha\frac{32}{9} \sum_{\substack{s=1\\ s\neq i}}^{N}\!\sigma^2 _{is} \left( 2\pi \mu_{is} k_BT \right)^{\frac{1}{2}} c_i^\alpha c_s^\alpha ({\bfu}_i^\alpha-{\bfu}_s^\alpha)+O(\alpha^2),
\end{align*}
 where we have used the definition of ${\ba}_{is}$ given in \eqref{eq:definitions}. The proof is complete.
 \qed
\end{proof}
%%%%%%%%%%%%%%%%%%%%%%%%%%%%%%%%%%%%%%%%%%%%%%%%%%%%%%%%%%%%%%%%%%%%%%%
\subsection{Asymptotic Analysis}
Here we present the asymptotic analysis of the species continuity equations and the species momentum balance equations towards a diffusion  model where the diffusion process is described by the MS equations.
%%%%%%%%%%%%%%%%%%%%%%%%%%%%%%%%%%%%%%%%%%%%%%%%%%%%%%%%%%%%%%%%%%%%%%%%
\begin{thm}
The Maxwellians defined in \eqref{eq:local_max1} are solutions of the initial-boundary value problem \eqref{eq:Scaled_KE} if $(c_i^\alpha, {\bfu}_i^\alpha)$ solves
\begin{equation}
\begin{aligned}
& \frac{\partial c_i^\alpha }{\partial t} +\frac{\partial }{\partial \bfx}(c_i^\alpha \bfu_i^\alpha)= 0, \quad i=1,\dots,4,
\\[-0.2em]
&\alpha^2 m_i  \frac{\partial }{\partial t} \Big(c_i^\alpha{\bfu}_i^{\alpha} \Big) +{k_B T} {\frac{\partial c_i^\alpha }{\partial \bfx}}  + \alpha^2 m_i  {\frac{\partial }{\partial {\bfx}}} \Big( c_i^\alpha{\bfu}_i^\alpha \otimes {\bfu}_i^\alpha \Big)=\frac{1}{\alpha}{\cal O}_i,
\end{aligned}
\label{eq:Continuity_and_momentum}
\end{equation}
where
\begin{equation*}
\frac{1}{\alpha}{\cal O}_i=-\frac{32}{9} \!\sum_{\substack{s=1 \\ s\neq i}}^{N} \!\! \sigma^2_{is}\!\left( 2\pi \mu_{is} k_BT \right)^{\!\frac{1}{2}}\!c_i^\alpha\! c_s^\alpha \!\big( {\bfu}_i^\alpha \!-\! {\bfu}_s^\alpha \big)+ O(\alpha).
\end{equation*}
Moreover, in the limit as $\alpha \rightarrow 0$ equations \eqref{eq:Continuity_and_momentum} reduces to
\begin{equation}
\begin{aligned}
& \frac{\partial c_i }{\partial t} + \frac{\partial {\bf{J}}_i }{\partial \bfx}= 0, \quad i=1,\dots,4,
\\[-0.2em]
& {\frac{\partial c_i }{\partial \bfx}}=\frac{32}{{9}}\sum_{\substack{s=1\\ s\neq i}}^{N}\sigma^2_{is}
         \left( \frac{2\pi\mu_{is}}{ k_BT} \right)^{\!\!\!\frac{1}{2}} 
         \Big( c_i{\bf{J}}_s - c_s{\bf{J}}_i \Big).
\end{aligned}
\label{eq:MM_equations}
\end{equation}
\end{thm}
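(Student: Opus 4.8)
The plan is to split the proof into the two assertions of the statement, relying on the method of moments for the first and a term-by-term passage to the limit for the second. For the first part I would take the local Maxwellian form \eqref{eq:local_max1} as an ansatz, substitute it into the scaled kinetic equation \eqref{eq:Scaled_KE}, and test the equation successively against $1$ and against $m_i\bfv_i$, integrating over $\bfv_i\in\RR^3$. The zeroth moment reproduces exactly the earlier continuity lemma and yields the first line of \eqref{eq:Continuity_and_momentum}, while the $m_i\bfv_i$ moment reproduces the momentum-balance lemma and yields the second line with right-hand side $\tfrac1\alpha{\cal O}_i$, which was already evaluated in the proof of that lemma. Since the temperature is frozen by assumption (a), the energy moment imposes no additional constraint and is automatically consistent with the ansatz; hence higher moments are algebraically determined by $(c_i^\alpha,\bfu_i^\alpha)$ through the Maxwellian closure, and requiring $(c_i^\alpha,\bfu_i^\alpha)$ to solve \eqref{eq:Continuity_and_momentum} is precisely what makes the Maxwellians formal solutions of \eqref{eq:Scaled_KE} in the sense of moments.

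For the second part I would pass to the limit $\alpha\to0$. I would set $c_i:=\lim_{\alpha\to0}c_i^\alpha$ and define the species fluxes ${\bf{J}}_i:=\lim_{\alpha\to0}c_i^\alpha\bfu_i^\alpha$, assuming, as is standard in such formal limits, that these fields together with their space-time derivatives converge. Passing to the limit in the continuity equation is immediate and gives $\partial_t c_i+\partial_{\bfx}{\bf{J}}_i=0$, the first line of \eqref{eq:MM_equations}. In the momentum balance, the two terms carrying the prefactor $\alpha^2$ vanish, as does the $O(\alpha)$ remainder in $\tfrac1\alpha{\cal O}_i$, so that only the pressure-gradient term $k_BT\,\partial_{\bfx}c_i^\alpha$ on the left survives and must balance the leading drag term on the right.

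The remaining work is algebraic. I would rewrite the drag contribution using the limiting identity $c_i^\alpha c_s^\alpha(\bfu_i^\alpha-\bfu_s^\alpha)=c_s^\alpha(c_i^\alpha\bfu_i^\alpha)-c_i^\alpha(c_s^\alpha\bfu_s^\alpha)\to c_s{\bf{J}}_i-c_i{\bf{J}}_s$, then reverse the overall sign and divide through by $k_BT$, using $(2\pi\mu_{is}k_BT)^{1/2}/k_BT=(2\pi\mu_{is}/k_BT)^{1/2}$, to arrive at the Maxwell--Stefan relation in the second line of \eqref{eq:MM_equations}. I expect no single hard obstacle here: the genuinely delicate computation, namely the evaluation of ${\cal O}_i$ via the change of variables to the center-of-mass velocity $\Xx$ and relative velocity $\Vv$ together with the ensuing Gaussian integrals, has already been carried out in the momentum-balance lemma. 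The only points that demand care are the sign bookkeeping in the flux rewriting and the justification that the formal limit may be taken term by term under the convergence of $(c_i^\alpha,\bfu_i^\alpha)$ and their derivatives.
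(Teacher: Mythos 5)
Your proposal is correct and follows essentially the same route as the paper: the first assertion is obtained by assembling the continuity and momentum-balance lemmas, and the second by dividing the momentum equation by $k_BT$ and passing formally to the limit $\alpha\to 0$, with the $\alpha^2$ terms and the $O(\alpha)$ remainder dropping out and the sign/flux algebra exactly as you describe. The only cosmetic difference is that the paper first decomposes $c_i^\alpha\bfu_i^\alpha = \bfJ_i^\alpha + c_i^\alpha\bfu^\alpha$ and then neglects the convective term using assumption (b) ($\bfu^\alpha\to 0$), whereas you define $\bfJ_i$ directly as $\lim_{\alpha\to 0}c_i^\alpha\bfu_i^\alpha$; the two definitions coincide in this limit.
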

%%%%%%%%%%%%%%%%%%%%%%%%%%%%%%%%%%%%%%%%%%%%%%%%%%%%%%%%%%%%%%%%%%%%%%%
\begin{proof}
\smartqed
Putting together equations \eqref{eq:continuity_equation} and \eqref{eq:Momentum_balance_equations} gives \eqref{eq:Continuity_and_momentum}. To obtain \eqref{eq:MM_equations}, we first set  $c_i^\alpha{\bf{u}}_i^\alpha = {\bf{J}}_i^\alpha + c_i^\alpha {\bf{u}}^\alpha$ in \eqref{eq:Continuity_and_momentum}, where ${\bf{u}}^\alpha$ represents the average velocity of the mixture, next we divide both sides of the second equation of \eqref{eq:Continuity_and_momentum} by $k_B T$, then take the limit for $\alpha\rightarrow 0$, where
\begin{equation*}
%  {\bf{N}}_i=\lim_{\alpha\rightarrow 0}{\bf{N}}_i^\alpha,\quad 
c_i = \lim_{\alpha\rightarrow 0}c_i^\alpha,    \qquad 
{\bf{J}}_i = \lim_{\alpha\rightarrow 0}{\bf{J}}_i^\alpha,  \qquad  
{\bf{u}} = \lim_{\alpha\rightarrow 0}{\bf{u}}^\alpha, 
\end{equation*}
 and neglect the convective term, that is $\displaystyle\frac{\partial}{\partial \bf x}(c_i {\bf u})=0$.
\qed
\end{proof}
%%%%%%%%%%%%%%%%%%%%%%%%%%%%%%%%%%%%%%%%%%%%%%%%%%%%%%%%%%%%%%%%%%%%%%%%
Observe that summing over all species both equations given in  \eqref{eq:MM_equations}, we obtain that $\displaystyle\frac{\partial c }{\partial t}=0$ and $\displaystyle{\frac{\partial c }{\partial \bfx}}=0$, respectively where $\displaystyle c=\sum_{i=1}^{N} c_i$. Therefore, we must have that $c$ is uniform in space and constant in time.
Hence, the second equation 
of system \eqref{eq:MM_equations} 
can be rewritten as       
\begin{equation}
\frac{\partial  c_i}{\partial\bf x} = \frac{1}{c}\sum_{\substack{s=1\\ s\neq i}}^{N}\frac{ c_i{\bf{J}}_s - c_s{\bf{J}}_i }{D_{is}} \;,\quad i=1,\dots,4,
\label{eq:newnew}
\end{equation}
where      
\begin{equation}
D_{is} = \frac{9}{32} \left( \frac{k_BT}{2\pi\mu_{is}} \right)^{\!\!\frac{1}{2}}\frac{1}{c \, \sigma_{is}^2}. 
\label{eq:diffcoeff}
\end{equation}
 The equations given in \eqref{eq:newnew} are the MS equations and the  MS diffusion coefficients are given in \eqref{eq:diffcoeff}. Replacing the second equation of \eqref{eq:MM_equations} with \eqref{eq:newnew}, we obtain 
 \begin{equation}
\begin{aligned}
& \frac{\partial c_i }{\partial t} + \frac{\partial {\bf{J}}_i }{\partial \bfx}= 0 \quad i=1,\dots,4
\\
&\frac{\partial  c_i}{\partial\bf x} = \frac{1}{c}\sum_{\substack{s=1\\ s\neq i}}^{N}\frac{ c_i{\bf{J}}_s - c_s{\bf{J}}_i }{D_{is}}.
\end{aligned}
\label{eq:diffusion_model}
\end{equation}
System \eqref{eq:diffusion_model} above is a diffusion model where the diffusion process is described by the MS equations.
%%%%%%%%%%%%%%%%%%%%%%%%%%%%%%%%%%%%%%%%%%%%%%%%%%%%%%%%%%%%%%%%%%%%%%%
\begin{rem}
System \eqref{eq:diffusion_model} is a diffusion model where the diffusion process is governed by the MS equations. We note that this diffusion model is similar to the ones obtained in \cite{BGP-NA-17,bou-gre-sal-15, HS-MMAS-17} however, the MS diffusion coefficients are different. This difference is as a result of the cross-sections considered. Furthermore,  the MS diffusion coefficients obtained in this work are the same as those obtained in \cite{AGS-19} because the cross-sections considered in both cases are of hard-sphere type, and also because of the chemical regime considered in \cite{AGS-19}, which results in the fact that diffusion effects are not seen on the  chemical reactive terms.
\end{rem}
%%%%%%%%%%%%%%%%%%%%%%%%%%%%%%%%%%%%%%%%%%%%%%%%%%%%%%%%%%%%%%%%%%%%%%%%
\section{Conclusion}
In this work, we have studied a  kinetic model for non-reactive mixtures with hard-sphere cross-sections under isothermal condition and obtained a diffusion model where the diffusion process is described by the Maxwell-Stefan equations. More precisely, the diffusion model of MS type was obtained as a hydrodynamic limit of the scaled kinetic equations. In particular, from the species continuity equations and the species momentum balance equations in the limit as the scaling parameter tends to zero.

 The diffusion model of MS type obtained in this work is similar to those obtained in \cite{BGP-NA-17, bou-gre-sal-15, HS-MMAS-17}. However, the MS diffusion coefficients are different. This is because the cross-sections of the kinetic models are different. More precisely, \cite{BGP-NA-17, bou-gre-sal-15, HS-MMAS-17} studied  kinetic models for non-reactive mixtures with Maxwellian molecules, general and analytic cross-sections, respectively, while in this work a kinetic model for non-reactive mixtures with hard-sphere cross-sections was considered.
 %%%%%%%%%%%%%%%%%%%%%%%%%%%%%%%%%%%%%%%%%%%%%%%%%%%%%%%%%%%%%%%%%%%%%%%
\begin{acknowledgement}
I thank FCT/Portugal for support through the PhD grant PD/BD/128188/2016 and  Centro de Matem\'{a}tica da Universidade do Minho.
 \end{acknowledgement}

\end{document}